\newcommand*\Let[2]{\State #1 $\gets$ #2}
\algrenewcommand\algorithmicrequire{\textbf{Precondition:}}
\algrenewcommand\algorithmicensure{\textbf{Postcondition:}}
\newtheorem{claim}{Claim}
\newtheorem{problem}{Problem}
\def\cli{\textsc{Maximum Clique}\xspace}
\def\mis{\textsc{Maximum Independent Set}\xspace}
\tikzstyle{vp}=[circle,fill,inner sep=0pt, minimum size=0.1cm]
\tikzstyle{vps}=[circle,fill,inner sep=0pt, minimum size=0.065cm]
\def\centerarc[#1](#2)(#3:#4:#5)
\newcommand{\needle}[1]{\text{Needle}(#1)}
\newcommand{\fork}[3]{[#1 \leftarrow #2 \rightarrow #3]}
\newcommand{\forkb}[3]{[#1 \rightarrow #2 \leftarrow #3]}
\newcommand{\dir}[1]{\text{dir}(#1)}
\newcommand{\vcdim}{\text{VCdim}}
\newcommand{\iocp}{\text{iocp}}
\DeclareMathOperator{\ocp}{ocp}
\newcommand{\cl}{\mathcal X}
\title{EPTAS for Max Clique on Disks and Unit Balls}
\author[1]{Marthe Bonamy\textsuperscript{$\dagger$}}
\author[2]{Édouard Bonnet\footnote{This work was performed within the framework of the LABEX MILYON (ANR-10- LABX-0070) of Université de Lyon, within the program "Investissements d'Avenir" (ANR-11-IDEX-0007) operated by the French National Research Agency (ANR)}}
\author[3]{Nicolas Bousquet\footnote{This work has been supported by the ANR Project DISTANCIA (ANR-17-CE40-0015) operated by the French National Research Agency (ANR).}}
\author[2,4]{Pierre Charbit\textsuperscript{$\dagger$}}
\author[2,5]{Stéphan Thomassé}
\affil[1]{CNRS, LaBRI, Université de Bordeaux\\
  \texttt{marthe.bonamy@u-bordeaux.fr}}
\affil[2]{Université de Lyon (COMUE), \\CNRS, ENS de Lyon, Université Claude-Bernard Lyon 1,\\ LIP, Lyon, France\\
  \texttt{edouard.bonnet@dauphine.fr}, \texttt{stephan.thomasse@ens-lyon.fr}}
\affil[3]{CNRS, G-SCOP laboratory, Grenoble-INP, France\\
\texttt{nicolas.bousquet@grenoble-inp.fr}}
\affil[4]{Université Paris Diderot - IRIF, Paris, France\\
  \texttt{charbit@irif.fr}}
\affil[5]{Institut Universitaire de France}
\authorrunning{M. Bonamy, \'E. Bonnet, N. Bousquet, P. Charbit, S. Thomassé}
\begin{document}

\maketitle

\begin{abstract}
  We propose a polynomial-time algorithm which takes as input a finite set of points of $\mathbb R^3$ and computes, up to arbitrary precision, a maximum subset with diameter at most $1$.
  More precisely, we give the first randomized EPTAS and deterministic PTAS for \textsc{Maximum Clique} in unit ball graphs.
  Our approximation algorithm also works on disk graphs with arbitrary radii, in the plane.

  Almost three decades ago, an elegant polynomial-time algorithm was found for \textsc{Maximum Clique} on unit disk graphs [Clark, Colbourn, Johnson; Discrete Mathematics '90].
  Since then, it has been an intriguing open question whether or not tractability can be extended to general disk graphs.
  Recently, it was shown that the disjoint union of two odd cycles is never the complement of a disk graph [Bonnet, Giannopoulos, Kim, Rz{\k{a}}{\.{z}}ewski, Sikora; SoCG '18].
  This enabled the authors to derive a QPTAS and a subexponential algorithm for \textsc{Max Clique} on disk graphs.  
  In this paper, we improve the approximability to a randomized EPTAS (and a deterministic PTAS).
  More precisely, we obtain a randomized EPTAS for computing the independence number on graphs having no disjoint union of two odd cycles as an induced subgraph, bounded VC-dimension, and linear independence number.
  We then address the question of computing \textsc{Max Clique} for disks in higher dimensions.
  We show that intersection graphs of unit balls, like disk graphs, do not admit the complement of two odd cycles as an induced subgraph.
  This, in combination with the first result, straightforwardly yields a randomized EPTAS for \textsc{Max Clique} on unit ball graphs.
  In stark contrast, we show that on ball graphs and unit 4-dimensional disk graphs, \textsc{Max Clique} is NP-hard and does not admit an approximation scheme even in subexponential-time, unless the Exponential Time Hypothesis fails.
\end{abstract}

\section{Introduction}
In an \emph{intersection graph}, the vertices are geometric objects with an edge between any pair of intersecting objects. 
Intersection graphs have been studied for many different families of objects due to their practical applications and their rich structural properties~\cite{McKee1999, Brandstadt1999}.
Among the most studied ones are \emph{disk graphs}, which are intersection graphs of closed disks in the plane, and their special case, \emph{unit disk graphs}, where all the radii are equal.
Their applications range from sensor networks to map labeling~\cite{DBLP:conf/waoa/Fishkin03}, and many standard optimization problems have been studied on disk graphs, see for example~\cite{EJvL2009} and references therein.
Most of the hard optimization and decision problems remain NP-hard on disk graphs and even unit disk graphs.
For instance, disk graphs contain planar graphs \cite{koebe} on which several of those problems are intractable.
However, shifting techniques and separator theorems may often lead to subexponential classic or parameterized algorithms \cite{Alber04,Marx15,SmithWormald,Biro17}.
Many approximation algorithms have been designed specifically on (unit) disk graphs, or more generally on geometric intersection graphs, see for instance \cite{Chan03,Nieberg04,Nieberg05,Erlebach05,Leeuwen06,Gibson10} to cite only a few.
Besides ad hoc techniques, local search and VC-dimension play an important role in the approximability of problems on (unit) disk graphs.
For the main packing and covering problems (\mis, \textsc{Min Vertex Cover}, \textsc{Minimum Dominating Set}, \textsc{Minimum Hitting Set}, and their weighted variants) at least a PTAS is known.

However, all the techniques that we mentioned are only amenable to packing and covering problems.  
The \cli problem is arguably the most prominent problem which does not fall into those categories.
For example, anything along the lines of exploiting a small separator cannot work for \cli, where the densest instances are the hardest.
Therefore, it seems that new ideas are necessary to get improved approximate or exact algorithms for this problem.
This is why, in this paper, we focus on solving \cli on (unit) disk graphs in dimension 2 or higher.

\paragraph*{Previous results.}
In 1990, Clark \emph{et al.}~\cite{Clark90} gave an elegant polynomial-time algorithm for \cli on unit disk graphs when the input is a geometric representation of the graph.
It goes as follows: guess in quadratic time the two more distant centers of disks in a maximum clique (at distance at most $2$), remove all the centers that would contradict this maximality, observe that the resulting graph is co-bipartite.
Hence, one can find an optimum solution in polynomial time by looking for a maximum independent set in the complement graph, which is bipartite. 
However, recognizing unit disk graphs is NP-hard \cite{Breu98}, and even $\exists \mathbb{R}$-complete~\cite{Kang12}.
In particular, if the input is the mere unit disk graph, one cannot expect to efficiently compute a geometric representation in order to run the previous algorithm.
Raghavan and Spinrad showed how to overcome this issue and suggested a polynomial-time algorithm which does not require the geometric representation \cite{Raghavan03}. 
Their algorithm is a subtle \emph{blind} reinterpretation of the algorithm by Clark \emph{et al.}
It solves \cli on a superclass of the unit disk graphs or correctly claims that the input is not a unit disk graph.
Hence, it cannot be used to efficiently recognize unit disk graphs.

The complexity of \cli on general disk graphs is a notorious open question in computational geometry.
On the one hand, no polynomial-time algorithm is known, even when the geometric representation is given.
On the other hand, the NP-hardness of the problem has not been established, even when only the graph is given as input.

The piercing number of a collection of geometric objects is the minimum number of points that hit all the objects.
It is known since the fifties (although the first published records of that result came later in the eighties) that the piercing number of pairwise intersecting disks is $4$ \cite{stacho,danzer}.
An account of this story can be found in a recent paper by Har-Peled \emph{et al.} \cite{Harpeled18}.
Amb\"uhl and Wagner observed that this yields a $2$-approximation for \cli~\cite{Ambuhl05}.
Indeed, after guessing in polynomial time four points hitting a maximum clique and removing every disk not hit by those points, the instance is partitioned into four cliques; or equivalently, two co-bipartite graphs.
One can then solve optimally each instance formed by one co-bipartite graph and return the larger solution of the two. 
This cannot give a solution more than twice smaller than the optimum.
Since then, the problem has proved to be elusive with no new positive or negative results.
The question on the complexity and further approximability of \cli on general disk graphs is considered as folklore~\cite{bang2006}, but was also explicitly mentioned as an open problem by Fishkin~\cite{DBLP:conf/waoa/Fishkin03}, Amb\"uhl and Wagner~\cite{Ambuhl05}.
Cabello even asked if there is a 1.99-approximation for disk graphs with two sizes of radii~\cite{CabelloOpen,Cabello2015}.
Recently, Bonnet \emph{et al.} \cite{bonnetetal18} showed that the disjoint union of two odd cycles is not the complement of a disk graph.
From this result, they obtained a subexponential algorithm running in time $2^{\tilde{O}(n^{2/3})}$ for \cli on disk graphs, based on a win-win approach.
They also got a QPTAS by calling a PTAS for \mis on graphs with sublinear odd cycle packing number due to Bock \emph{et al.}~\cite{Bock14}, or branching on a low-degree vertex.


\paragraph*{Our results.}
Our main contributions are twofold.
The first is a randomized EPTAS (Efficient Polynomial-Time Approximation Scheme, that is, a PTAS in time $f(\varepsilon)n^{O(1)}$) for \mis on graphs of $\cl(d,\beta,1)$. The class $\cl(d,\beta,1)$ denotes the class of graphs whose neighborhood hypergraph has VC-dimension at most $d$, independence number at least $\beta n$, and no disjoint union of two odd cycles as an induced subgraph (for formal definitions see Section~\ref{sec:prelim}).

\begin{restatable}{theorem}{eptas}\label{thm:eptas}
  For any constants $d \in \mathbb N$, $0 < \beta \leqslant 1$, for every $0 < \varepsilon < 1$, there is a randomized $(1-\varepsilon)$-approximation algorithm running in time $2^{\tilde{O}({1/\varepsilon}^3)}n^{O(1)}$ for \mis on graphs of $\cl(d,\beta,1)$ with $n$ vertices.
\end{restatable}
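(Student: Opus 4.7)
The form of the running time $2^{\tilde O(1/\varepsilon^3)} n^{O(1)}$ suggests a two-phase strategy: a sampling/enumeration phase producing, with high probability, a structurally simple instance, followed by a polynomial-time PTAS run as a black box. My plan is to combine the PTAS of Bock \emph{et al.}~\cite{Bock14} for \mis on graphs of sublinear odd cycle packing number with a random sample that exploits both the bounded VC-dimension of the neighborhood hypergraph and the absence of two vertex-disjoint induced odd cycles.

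First, I would sample a random set $R \subseteq V(G)$ of size $\tilde O(1/\varepsilon^3)$. By Haussler's packing lemma applied to the neighborhood hypergraph of $G$ (whose VC-dimension is at most the constant $d$), the set $R$ is, with high probability, an $\varepsilon$-approximation: every neighborhood of linear size is intersected by approximately the correct proportion of $R$. I would then enumerate over all $2^{|R|} = 2^{\tilde O(1/\varepsilon^3)}$ subsets $I_R \subseteq R$, each viewed as a guess for the trace on $R$ of some fixed near-optimal independent set $I$. For each guess I restrict attention to the subgraph $G_R := G \setminus \bigl(N(I_R) \cup (R \setminus I_R)\bigr)$; on the correct guess $I_R = I \cap R$, this restriction preserves $I \setminus R$ as an independent set.

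The main technical step is to convert the purely combinatorial hypothesis $\iocp(G) \leqslant 1$ into a \emph{quantitative} bound on $\ocp(G_R)$. I would argue that, for the correct guess and with high probability over $R$, at most $\varepsilon n$ vertices of $G_R$ can lie on vertex-disjoint induced odd cycles that are missed by $R$. The intuition is the following dichotomy: any two vertex-disjoint induced odd cycles $C_1,C_2$ in $G_R$ must be joined by at least one edge (for otherwise $\iocp(G) \geqslant 2$); but two ``unseen'' odd cycles whose vertices avoid $N(R)$ cannot admit many such crossing edges without violating the $\varepsilon$-approximation guarantee of $R$. Making this precise should, via a Ramsey-style accounting combined with shatter-function bounds, let me delete $O(\varepsilon n)$ exceptional vertices of $G_R$ to obtain a graph $G_R'$ with $\ocp(G_R') = O(\varepsilon n)$.

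Finally, I would invoke the PTAS of Bock \emph{et al.} on $G_R'$ to obtain, in polynomial time, an independent set of size at least $(1 - \varepsilon)\alpha(G_R')$; the algorithm returns the best such set over all guesses. The hypothesis $\alpha(G) \geqslant \beta n$ enters precisely here, to convert the $O(\varepsilon n)$ additive slack accumulated in the preceding steps into a multiplicative $O(\varepsilon/\beta)$ loss, which after rescaling $\varepsilon$ by the constant $\beta$ produces the advertised $(1-\varepsilon)$-approximation. The main obstacle is the third step: extracting a global $\ocp$ bound from the local forbidden-induced-subgraph condition on a random induced subgraph. It is in this interplay between Ramsey-like analysis of odd-cycle interactions and VC-dimension sampling bounds that the creativity of the proof should concentrate, the other ingredients being standard (Haussler's lemma, enumeration over a small set, and a black-box PTAS call).
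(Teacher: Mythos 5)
Your proposal departs substantially from the paper's proof, and the middle step of your plan has a genuine gap that I don't see how to close. The paper does \emph{not} try to bound the odd cycle packing number of a residual graph and then invoke Bock \emph{et al.}\ as a black box. Instead, after sampling an $\varepsilon$-net $S$ that lies inside a fixed optimum $I$ (so that $N(S) \cap I = \emptyset$ and, after deleting $N(S)$, every surviving vertex has at most $\delta|I|$ neighbors in $I$), it computes a \emph{shortest} odd cycle $C_{\text{og}}$ in $H' := H - N(S)$ and exploits the single structural fact that $\iocp(H) \leqslant 1$ forces $H' - N[C_{\text{og}}]$ to be bipartite. The whole combinatorial effort then goes into showing that some odd-cycle transversal of the form $N(S) \cup (\text{a small layer})\cup (\text{a small slice of strata around } C_{\text{og}})$ intersects $I$ in at most $\varepsilon|I|$ vertices; this relies on a careful BFS-layer-plus-strata analysis that uses the minimality of $C_{\text{og}}$ to rule out short odd closed walks. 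Nothing in the argument passes through a quantitative bound on $\ocp$.

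The concrete gap in your step~3 is that $\iocp(G) \leqslant 1$ simply does not imply that $\ocp$ of any large induced subgraph is small, and your sampling cannot manufacture such a bound. A complete tripartite graph $K_{n,n,n}$ has $\iocp = 1$, $\vcdim = O(1)$, $\alpha = n = |V|/3$, yet $\ocp = n$; its induced subgraphs on $\Omega(n)$ vertices still have $\ocp = \Omega(n)$ unless you happen to delete down to one color class. So the statement ``at most $\varepsilon n$ vertices of $G_R$ lie on vertex-disjoint odd cycles'' does not follow from an $\varepsilon$-approximation property of $R$ together with $\iocp \leqslant 1$: the edge between two vertex-disjoint odd cycles in $G_R$ lives \emph{inside} $G_R$ and poses no contradiction with the sample being representative. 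The ``Ramsey-style accounting'' you invoke is precisely the missing idea, and I do not see what argument you have in mind that would fill it; the paper avoids this entirely by never needing an $\ocp$ bound. There is also a secondary issue with your final step: Bock \emph{et al.}~\cite{Bock14} give a PTAS whose exponent on $n$ grows with $1/\varepsilon$, so even if your step~3 were repaired you would obtain a PTAS with running time $2^{\tilde O(1/\varepsilon^3)} n^{g(\varepsilon)}$ rather than the claimed EPTAS $2^{\tilde O(1/\varepsilon^3)} n^{O(1)}$. The sampling-and-enumeration outer loop of your plan is fine and parallels the paper's use of Haussler--Welzl, but the reduction to a bipartite subproblem via an explicitly constructed small odd-cycle transversal — not a reduction to small $\ocp$ — is the engine that makes the paper's bound go through.
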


Using the forbidden induced subgraph result of Bonnet \emph{et al.} \cite{bonnetetal18}, it is then easy to reduce \cli on disk graphs to \mis on $\cl(4,\beta,1)$ for some constant $\beta$.
We therefore obtain a randomized EPTAS (and a PTAS) for \cli on disk graphs, settling almost\footnote{The NP-hardness, ruling out a 1-approximation, is still to show.} completely the approximability of this problem.
\begin{restatable}{theorem}{disk}\label{thm:disks-eptas}
  There is a randomized EPTAS for \cli on disk graphs, even without geometric representation.
  Its running time is $2^{\tilde{O}(1/\varepsilon^3)}n^{O(1)}$ for a $(1-\varepsilon)$-approximation on a graph with $n$ vertices.
\end{restatable}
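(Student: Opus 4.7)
The plan is to invoke Theorem~\ref{thm:eptas} on the complement of the input disk graph, after a preprocessing step that forces the independence number of the complement to be a constant fraction of the number of vertices.

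First I would verify the structural prerequisites for membership of $\bar G$ (or a carefully chosen induced subgraph of it) in some class $\cl(d,\beta,1)$. The condition forbidding two vertex-disjoint induced odd cycles is exactly the structural theorem of Bonnet \emph{et al.}~\cite{bonnetetal18} applied to $\bar G$, and it is preserved by taking induced subgraphs. For the VC-dimension of the neighborhood hypergraph, I would use that closed disks in the plane form a set system of constant VC-dimension; this transfers to the neighborhood hypergraph of $G$ with a small absolute constant $d$, and complementing every hyperedge to pass from $G$ to $\bar G$ leaves the VC-dimension unchanged, so one may fix some $d \leqslant 4$.

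The delicate condition is the linear lower bound $\alpha(\bar G) = \omega(G) \geqslant \beta n$, which a generic disk graph need not satisfy. I would enforce it by an Amb\"uhl--Wagner style preprocessing~\cite{Ambuhl05}, based on the classical piercing theorem for pairwise intersecting disks~\cite{stacho,danzer}: any maximum clique of $G$ is stabbed by a set of four points. With a geometric representation in hand, one enumerates in polynomial time the $O(n^8)$ relevant candidate $4$-tuples of stabbing points (coming from intersections of disk boundaries, together with vertex positions) and, for each, keeps only the disks containing at least one of the four chosen points. In the correct guess no disk of a maximum clique is discarded, and the resulting induced subgraph $G[V']$ is covered by $4$ cliques (one per stabbing point), so $\omega(G[V']) = \omega(G) \geqslant |V'|/4$ and hence $\alpha(\bar G[V']) \geqslant |V'|/4$; one may therefore set $\beta = 1/4$. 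Applying Theorem~\ref{thm:eptas} to each candidate $\bar G[V']$ with these parameters and approximation target $\varepsilon$, and returning the largest clique found across all guesses, yields a $(1-\varepsilon)$-approximation in total time $n^{O(1)} \cdot 2^{\tilde O(1/\varepsilon^3)} n^{O(1)} = 2^{\tilde O(1/\varepsilon^3)} n^{O(1)}$.

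The main obstacle I anticipate is the ``without geometric representation'' clause: the preprocessing above enumerates stabbing points, which requires coordinates. I would handle this in the spirit of Raghavan and Spinrad's blind reinterpretation of Clark \emph{et al.}~\cite{Raghavan03}, by enumerating polynomially many combinatorial witnesses (short vertex tuples in $G$) that implicitly encode a valid $4$-point stabbing of some maximum clique, so that for at least one witness the induced subgraph still decomposes into a bounded number of cliques. Arguing that such a blind enumeration exists, produces only graphs still in $\cl(d,\beta,1)$ for the same absolute constants, and can be performed purely from the adjacency structure, is the technically most delicate step of the proof, though conceptually it mirrors the representation-dependent argument.
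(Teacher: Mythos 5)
Your argument in the geometric-representation case coincides with the paper's: the four-point piercing theorem, enumeration of $O(n^8)$ candidate stabbing sets, restriction to the hit disks, and $\beta = 1/4$. Your structural prerequisites (VC-dimension at most~$4$ and $\iocp(\overline G) \leqslant 1$ via Theorem~\ref{thm:disk-obstruction}) are also the ones the paper invokes.

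The gap is precisely where you flag it: the representation-free case is not actually proved. Gesturing at a ``blind reinterpretation \emph{\`a la} Raghavan--Spinrad'' is not a proof; the Raghavan--Spinrad trick for unit disk graphs relies on the residual graph being co-bipartite, a combinatorially certifiable property, and it is far from clear how to certify combinatorially that an induced subgraph of a general disk graph decomposes into four cliques, nor how to enumerate only polynomially many witnesses for such a decomposition. The paper's actual argument is different and considerably simpler: disk graphs are $6\omega$-degenerate, so one can find in polynomial time (from the adjacency structure alone) a vertex $v$ of degree at most $6\omega$. Branch on whether $v$ lies in a maximum clique. If it does, restrict to $G[N[v]]$, which has clique number $\omega$ and at most $6\omega+1$ vertices, so its complement has independence number at least a $1/6$-fraction of its order, and Theorem~\ref{thm:eptas} applies with $\beta = 1/6$; if it does not, delete $v$ and recurse. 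The branching tree has size $O(n)$, adding only a polynomial overhead. This degeneracy-based branching is what makes the statement go through without coordinates, and it is the idea your proposal is missing.
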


The second contribution is to show the same forbidden induced subgraph for unit ball graphs as the one obtained for disk graphs.
The proofs are radically different and the classes are incomparable.
So the fact that the same obstruction applies for disk graphs and unit ball graphs might be somewhat accidental.

\begin{restatable}{theorem}{forbidden}\label{thm:no-two-odd-cycles-ubg}
  A complement of a unit ball graph cannot have a disjoint union of two odd cycles as an induced subgraph.
  In other words, if $G$ is a unit ball graph, then $\iocp(\overline G) \leqslant 1$.
\end{restatable}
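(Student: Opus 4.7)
I would proceed by contradiction. Suppose $\overline{G}$ contains two vertex-disjoint induced odd cycles $(a_1,\dots,a_m)$ and $(b_1,\dots,b_n)$, and identify each vertex with the centre of its unit ball in $\mathbb R^3$. The hypothesis becomes $\|a_i - a_{i+1}\| > 2$ cyclically (with $\|a_i-a_j\|\leq 2$ for non-consecutive pairs within the cycle), the analogue for the $b_j$'s, and $\|a_i - b_j\| \leq 2$ for every cross pair. Immediately every $b_j$ lies in $K_a := \bigcap_i B(a_i, 2)$ and every $a_i$ lies in $K_b := \bigcap_j B(b_j, 2)$.

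The first substantive step is a moment bound on $K_a$. For $p \in K_a$ and $c_a := \tfrac{1}{m}\sum_i a_i$, the identity
\[
  \sum_i \|p - a_i\|^2 \;=\; m\,\|p - c_a\|^2 + \sum_i \|a_i - c_a\|^2
\]
combined with $\|p-a_i\|^2 \leq 4$ gives $\|p-c_a\|^2 \leq 4 - \tfrac{1}{m}\sum_i\|a_i-c_a\|^2$. Since $\sum_{i<j}\|a_i-a_j\|^2 = m\sum_i\|a_i-c_a\|^2$ and the $m$ cyclic pairs alone contribute more than $4m$, one has $\sum_i\|a_i-c_a\|^2 > 4$, so each $b_j$ sits strictly inside the ball of radius $\sqrt{4 - 4/m}$ around $c_a$. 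The symmetric bound applies to $K_b$, and combining the two via the parallel-axis identity forces $\|c_a - c_b\|$ to be bounded by an explicit constant.

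The main obstacle is a geometric lemma: $K_a$ cannot host $n$ points obeying the required cyclic pattern of distances. For the base case $m=n=3$ this amounts to showing that the intersection of three closed balls of radius $2$ in $\mathbb R^3$ with pairwise-far centres does not contain three points themselves pairwise farther than $2$. I would prove this by placing the $a_i$'s in their common plane $\Pi_a$, using the mirror symmetry of $K_a$ across $\Pi_a$, and bounding separately the ``equator'' $K_a \cap \Pi_a$ (an intersection of three planar discs of radius $2$ whose centres are pairwise far, which one checks has diameter strictly less than $2$) and the two ``polar caps'' of height at most $\sqrt{4 - \rho_a^2}$, where $\rho_a \geq 2/\sqrt 3$ is the circumradius of the $a_i$-triangle. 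Since the equator has width $<2$, any three points of $K_a$ pairwise farther than $2$ must place at least two in the polar caps, and a third then has to sit far from the axis yet inside the equator---a contradiction. For $m,n\geq 5$ the additional non-consecutive closeness constraints further restrict the admissible region, and the same mirror-symmetry argument enhanced by the dual inclusion $a_i\in K_b$ again forces some cross pair to be at distance $\leq 2$, contradicting the cyclic ``far'' requirement.

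The real content lies in this geometric lemma, which crucially exploits the mirror symmetry of $K_a$ through the plane of the $a_i$'s---a property specific to unit (equiradial) balls---so the proof is expected to be ``radically different'' from the disk argument of \cite{bonnetetal18}. An elegant sum-of-squares certificate of infeasibility for the underlying quadratic system would provide a more uniform alternative that bypasses the case analysis.
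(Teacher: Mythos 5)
Your setup is correct (translating the two induced odd cycles in $\overline G$ into the distance constraints on the centres), and the parallel-axis / moment computation is sound, but the approach is fundamentally different from the paper's and contains genuine gaps. The paper reduces everything to a single topological lemma: an odd closed polygonal chain $C$ produces a set of directions $\text{Needle}(C)$ that traces out a \emph{closed antipodal curve} on the 2-sphere---oddness of the cycle is exactly what makes the curve close up antipodally after one traversal---and two closed antipodal curves on $S^2$ must intersect, yielding a common direction $\overrightarrow{x_ix} \parallel \overrightarrow{y_jy}$ with $x\in x_{i-1}x_{i+1}$ and $y \in y_{j-1}y_{j+1}$. The contradiction then follows from an elementary convex-quadrilateral inequality on $x_i, x, y_j, y$. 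Nothing in your proposal plays the role of this antipodal-curve mechanism, and in particular you never exploit the parity of $m$ and $n$ beyond the triangle base case, which is a red flag since the statement is false for even cycles.

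Two concrete gaps. First, the base-case argument is not a proof: $K_a$ is not partitioned into an equator and two caps, so ``any three points pairwise farther than $2$ must place at least two in the polar caps'' does not follow from the equator having diameter $<2$, and ``a third then has to sit far from the axis yet inside the equator---a contradiction'' is simply asserted. Note also that your moment bound places $K_a$ in a ball of radius $\sqrt{4-4/m}\approx 1.63$ for $m=3$, while a pairwise-far triple $b_1,b_2,b_3$ only requires circumradius $> 2/\sqrt 3 \approx 1.15$, so the coarse bound leaves plenty of room; the finer geometric analysis you gesture at is where all the work is, and it is missing. Second, and more seriously, for $m,n\geq 5$ the centres $a_1,\dots,a_m$ are generically \emph{not coplanar}, so there is no plane $\Pi_a$, no mirror symmetry of $K_a$, and no ``equator''; the whole strategy collapses, and the one-sentence appeal to ``the additional non-consecutive closeness constraints'' and ``the dual inclusion $a_i\in K_b$'' is not an argument. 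The uniform treatment of all odd $m,n$ is precisely what the paper's needle-curve lemma buys, and your proposal has no substitute for it.
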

In the previous statement $\iocp$ denotes the \emph{induced odd cycle packing number} of a graph, i.e., the maximum number of odd cycles as a disjoint union in an induced subgraph.
Again, Theorem~\ref{thm:eptas} and Theorem~\ref{thm:no-two-odd-cycles-ubg} naturally lead to:
\begin{restatable}{theorem}{ball}\label{thm:ubg}
  There is a randomized EPTAS in time $2^{\tilde{O}({1/\varepsilon^3})}n^{O(1)}$ for \cli on unit ball graphs, even without the geometric representation.
\end{restatable}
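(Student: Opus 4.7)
My plan is to deduce the theorem by combining Theorem~\ref{thm:eptas} with the forbidden-subgraph result of Theorem~\ref{thm:no-two-odd-cycles-ubg}, in direct analogy with the disk-graph case of Theorem~\ref{thm:disks-eptas}. Since \cli on the input unit ball graph $G$ equals \mis on $\overline G$, it suffices to produce in polynomial time an $n^{O(1)}$-size family of induced subgraphs of $\overline G$ lying in the class $\cl(d,\beta,1)$ for absolute constants $d,\beta>0$, one of which contains a maximum independent set of $\overline G$; one then applies the EPTAS of Theorem~\ref{thm:eptas} to each candidate and returns the best answer, for a total running time of $n^{O(1)}\cdot 2^{\tilde O(1/\varepsilon^3)}n^{O(1)}=2^{\tilde O(1/\varepsilon^3)}n^{O(1)}$.

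I would verify the three defining conditions of $\cl(d,\beta,1)$ in order. First, $\iocp(\overline G)\le 1$ is exactly Theorem~\ref{thm:no-two-odd-cycles-ubg} and is hereditary. Second, in any unit-ball representation with centers $c_v\in\mathbb R^3$, the $\overline G$-neighborhood of $v$ is the trace on the center set of the complement of the closed ball of radius $2$ around $c_v$; radius-$2$ balls in $\mathbb R^3$ form a set system of VC-dimension at most $4$, and complementation preserves VC-dimension, so the neighborhood hypergraph of $\overline G$ has VC-dimension at most $4$. This is an intrinsic property of $G$ once it is known to admit a unit-ball representation, so no access to the geometric coordinates is needed for the bound to be used.

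The third, and only nontrivial, step is arranging that $\alpha(\overline G)=\omega(G)\ge\beta n$. For this I would invoke a Helly-type piercing bound: any family of pairwise intersecting unit balls in $\mathbb R^3$ can be pierced by a constant number $p$ of points (by Jung's theorem their centers fit in a common ball of radius $\sqrt{3/2}$, which admits an $O(1)$ covering by unit balls whose centers then pierce the family, since any unit ball with center inside such a covering ball contains its center). Consequently, a maximum clique $K$ of $G$ decomposes into at most $p$ cliques $K_1,\dots,K_p$, each a family of unit balls through a common point. As in the disk case of Theorem~\ref{thm:disks-eptas}, these ``star-cliques'' of $G$ can be enumerated combinatorially in $n^{O(1)}$ time, yielding candidate subsets $V'\subseteq V(G)$ such that for at least one guess $K\subseteq V'$ and $G[V']$ is covered by $p$ cliques, whence $\omega(G[V'])\ge |V'|/p$ and Theorem~\ref{thm:eptas} applies to $\overline{G[V']}$ with $\beta=1/p$. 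The main obstacle is precisely this combinatorial enumeration of star-cliques in the ``blind'' setting where no geometric representation is given; the argument transfers essentially verbatim from the disk-graph proof of Theorem~\ref{thm:disks-eptas}, with only the ambient dimension and the piercing constant changed.
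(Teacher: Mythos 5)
Your high-level framework is correct and matches the paper: reduce \cli on $G$ to \mis on an induced subgraph of $\overline G$ lying in $\cl(d,\beta,1)$, then invoke Theorem~\ref{thm:eptas} together with Theorem~\ref{thm:no-two-odd-cycles-ubg}. Your handling of the VC-dimension and $\iocp$ conditions is fine. The gap is in the third step, where you need $\alpha(\overline G)\geqslant \beta |V(G)|$ in the blind setting. You propose to enumerate ``star-cliques'' through piercing points, but that enumeration is exactly the part of the disk-graph proof of Theorem~\ref{thm:disks-eptas} that \emph{requires} the geometric representation (it walks over the $O(n^2)$ faces of the circle arrangement to guess the piercing points). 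The representation-free branch of that proof uses a different mechanism, namely $6\omega$-degeneracy of disk graphs, not star-clique enumeration, so the claim that the argument ``transfers essentially verbatim'' does not hold: you would have to supply and justify a new, purely combinatorial procedure, which your proposal leaves open.

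The paper's actual fix for unit balls is simpler and sidesteps piercing entirely. One guesses (at a multiplicative cost of $n$) a single vertex $v$ of a maximum clique and restricts to $H:=G[N(v)]$. Because the centers of balls adjacent to $v$ lie in a ball of radius $2$ around the center of $v$, a bounded packing argument --- the paper cites the kissing number of spheres in $\mathbb R^3$ --- shows $N(v)$ can always be partitioned into $25$ cliques, for \emph{any} vertex $v$ of a unit ball graph. Hence $\alpha(\overline H)\geqslant |V(H)|/25$, giving $\overline H\in\cl(4,\tfrac{1}{25},1)$ with no need to compute the partition, the piercing points, or anything geometric; one simply runs the EPTAS of Theorem~\ref{thm:eptas} on $\overline H$ for each of the $n$ choices of $v$ and returns the best clique found. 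Your Jung's-theorem covering computation is essentially the same existence argument, so the right move would have been to apply it locally to a guessed $N(v)$ rather than globally to enumerated piercing points.
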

Before that result, the best approximation factor was 2.553, due to Afshani and Chan~\cite{Afshani05}.
In particular, even getting a 2-approximation algorithm (as for disk graphs) was open.

Finally we show that such an approximation scheme, even in subexponential time, is unlikely for ball graphs (that is, 3-dimensional disk graphs with arbitrary radii), and unit 4-dimensional disk graphs.
Our lower bounds also imply NP-hardness.
To the best of our knowledge, the NP-hardness of \cli on unit $d$-dimensional disk graphs was only known when $d$ is superconstant ($d=\Omega(\log n)$)~\cite{Afshani08}.

In the following paragraphs, we sketch the principal lines of our two main contributions.

\paragraph*{EPTAS for \mis on $\cl(d,\beta,1)$}
The first main result of this paper asserts that if a graph $G$ satisfies that every two odd cycles are joined by an edge, the Vapnik-Chervonenkis dimension of the hypergraph of the neighborhoods of $G$ is bounded, and $\alpha(G)$ is at least a constant fraction of $|V(G)|$, then $\alpha(G)$ can be computed in polynomial time at any given precision.
More precisely, we present in that case a randomized EPTAS running in time $2^{\tilde{O}({1/\varepsilon^3})}n^{O(1)}$ and a deterministic PTAS.

Our algorithm works as follows.
We start by sampling a small subset of vertices.
Hoping that this small subset is entirely contained in a fixed optimum solution $I$, we include the selected vertices to our solution and remove their neighborhood from the graph.
Due to the classic result of Haussler and Welzl \cite{HausslerW86} on $\varepsilon$-nets of size $O(d/\varepsilon \log{1/\varepsilon})$ (where $d$ is the VC-dimension), this sampling lowers the degree in $I$ of the remaining vertices.
We compute a shortest odd cycle.
If this cycle is short, we can remove its neighborhood from the graph and solve optimally the problem in the resulting graph, which is bipartite by assumption.
If this cycle is long, we can efficiently find a small odd-cycle transversal.
This is shown by a careful analysis on the successive neighborhoods of the cycle, and the recurrent fact that this cycle is a shortest among the ones of odd length.

\paragraph*{The complement of the union of two odd cycles is not a unit ball graph}
Given a needle in $\mathbb R^3$ whose middlepoint is attached to the origin, one can apply a continuous motion in order to turn it around (a motion à la Kakeya, henceforth \emph{Kakeya motion}).
A Kakeya motion can be seen as a closed antipodal curve on the 2-sphere.
If we now consider two needles, each with a Kakeya motion, then the two needles have to go through a same position. 
This simply follows from the fact that two antipodal curves on the 2-sphere intersect.
The second main result of this paper is a translation of this Jordan-type theorem in terms of intersection graphs:
The complement of a unit ball graph does not contain the disjoint union of two odd cycles.
The proof can really be seen as two Kakeya motions, each one along the two odd cycles, leading to a contradiction when the needles achieve parallel directions.

Together with the first result, it implies a randomized EPTAS for \cli on disk graphs, and for the following problem: Given a set $S$ of points in $\mathbb R^3$, find a largest subset of $S$ of diameter at most $1$.

\paragraph*{Organization}
The rest of the paper is organized as follows.
In Section~\ref{sec:prelim}, we recall some relevant notations for graphs and elementary geometry, the definitions of VC-dimension, disk graphs, and approximation schemes.
We finish this section by introducing a class of graphs parameterized by three constants: the VC-dimension, the ratio \emph{independence number divided by number of vertices}, and the maximum number of odd cycles that can be found as a disjoint union in an induced subgraph.
In Section~\ref{sec:eptas}, we design a randomized EPTAS for \mis on this class.
In Section~\ref{sec:higher}, we show that complements of unit ball graphs do not have a disjoint union of two odd cycles as an induced subgraph.
This yields a randomized EPTAS for \cli on unit ball graphs (as well as on disk graphs).
This is tight in two directions: having different values of radii, and the dimension of the ambient space.
Indeed, we complement this positive result by showing that \cli is unlikely to even have a QPTAS on ball graphs where all the radii are arbitrarily close to 1, and on $4$-dimensional unit ball graphs.
In Section~\ref{sec:perspectives}, we make some observations about the EPTAS and propose some lines of thoughts on how to tackle the computational complexity of \cli in disk and unit ball graphs. 

\section{Preliminaries}\label{sec:prelim}

\paragraph*{Graph notations}
Let $G$ be a simple graph.
We denote by $\overline{G}$ its complement, i.e., the graph obtained by making every non-edge an edge and vice versa.
$V(G)$ and $E(G)$ represent its set of vertices and its set of edges, respectively.
We denote by $\alpha(G)$ the \emph{independence number} of $G$, i.e., the size of a maximum independent set (or stable set), and by $\omega(G)$ the \emph{clique number} of $G$, i.e., the size of a maximum clique.
For $S \subseteq V(G)$, its open neighborhood, denoted by $N_G(S)$, is the set of vertices that are not in $S$ and have a neighbor in $S$, and its closed neighborhood is defined by $N_{G}[S]=S\cup N_{G}(S)$. We omit the subscript $G$ if the graph is obvious from the context and we write $N_{G}(x)$ instead of $N_{G}(\{x\})$.

The \emph{odd cycle packing number} of $G$, denoted by $\ocp(G)$, is defined as the maximum number of vertex-disjoint odd cycles and the \emph{induced odd cycle packing number} of $G$, denoted by $\iocp(G)$, is the maximum number of vertex-disjoint odd cycles with no edge between any two of them.

\paragraph*{VC-dimension}
VC-dimension has been introduced by Vapnik and Chervonenkis in the seminal paper~\cite{vapnik}.
Let $H=(V,E)$ be a hypergraph.
A set $X$ of vertices of $H$ is \emph{shattered} if for every subset $Y$ of $X$ there exists a hyperedge $e \in E$ such that $e \cap X = Y$.
An intersection between $X$ and a hyperedge $e$ of $E$ is called a \emph{trace} (on $X$).
Equivalently, a set $X$ is shattered if all its $2^{|X|}$ traces exist.
The \emph{VC-dimension} of a hypergraph is the maximum size of a shattered set.
As an abuse of language, we call VC-dimension of a graph $G$, denoted by $\vcdim(G)$, the VC-dimension of the neighborhood hypergraph $(V(G),\{N_G(v) $ $|$ $v \in V(G)\})$.

\paragraph*{Geometric notations}
For a positive integer $d$, we denote by $\mathbb R^d$ the $d$-dimensional euclidean space.
If $x$ and $y$ are two points of $\mathbb R^d$, $xy$ is the straight-line segment whose endpoint are $x$ and $y$.
We denote by $d(x,y)$ the euclidean distance between $x$ and $y$.
A \emph{$d$-dimensional closed disk} is defined from a center $x \in \mathbb R^d$ and a radius $r \in \mathbb R^+$, as the set of points $\{y \in \mathbb R^d $ $|$ $d(x,y) \leqslant r\}$, i.e., at distance at most $r$ from $x$.
The \emph{diameter} of a subset $S \subseteq \mathbb R^d$ is defined as $\underset{x,y \in S}{\sup}d(x,y)$.
The \emph{piercing number} (also called \emph{hitting set} or \emph{transversal}) of a collection $\mathcal O$ of geometric objects in $\mathbb R^d$ is the minimum number of points of $\mathbb R^d$ that \emph{pierce} (or \emph{hit}) all the objects of $\mathcal O$, i.e., each object contains at least one of these points.

\paragraph*{Disk graphs and their forbidden induced subgraphs}

A \emph{$d$-dimensional disk graph} is the intersection graph of $d$-dimensional closed disks of $\mathbb R^d$.
We shorten $2$-dimensional disk graph in \emph{disk graph}, and $3$-dimensional disk graph in \emph{ball graph}.
A {$d$-dimensional unit disk graph} is the intersection graph of unit $d$-dimensional closed disks of $\mathbb R^d$, that is, disks with radius $1$.
Unit $d$-dimensional disk graphs can be thought of only with points: vertices are points (at the center of the disks) and two points are adjacent if they are at distance at most $2$.
In particular, solving \cli on those graphs is equivalent to finding a maximum sub-collection of points whose diameter is at most a fixed value.

Bonnet \emph{et al.} established the following obstruction for disk graphs.
\begin{theorem}[\cite{bonnetetal18}]\label{thm:disk-obstruction}
  The complement of a disk graph cannot have the disjoint union of two odd cycles as an induced subgraph.
\end{theorem}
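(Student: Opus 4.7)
The plan is to argue by contradiction. Suppose $G$ is a disk graph with disks $A_1, \ldots, A_{2p+1}$ and $B_1, \ldots, B_{2q+1}$ (indexed cyclically) representing the vertices of two vertex-disjoint induced odd cycles $C_1$ and $C_2$ of $\overline G$. The induced-subgraph hypothesis translates geometrically into: within each family, consecutive disks are pairwise disjoint while non-consecutive disks pairwise intersect; across the two families, every $A_i$ meets every $B_j$. I would first normalize the representation, ideally by a Möbius inversion (which sends circles and lines to circles and lines and preserves the intersection pattern of the corresponding disks or half-planes), so that one distinguished disk, say $A_1$, is either a half-plane or at least not nested with any $B_j$. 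This step is to avoid the situation where some $B_j$ fully contains $A_1$ or vice versa, because then the trace-on-boundary argument below would collapse.

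Under this normalization, each $B_j$ cuts $\partial A_1$ along an arc $\beta_j$, yielding a cyclic ordering of the $\beta_j$'s around the circle $\partial A_1$. The combinatorial structure of the odd cycle $C_2$ forces a specific interleaving pattern on these arcs: for $B_j, B_{j+1}$ consecutive in $C_2$ (so disjoint disks), the arcs $\beta_j, \beta_{j+1}$ must be disjoint on $\partial A_1$ with disjoint associated chords inside $A_1$; while non-consecutive $B_j, B_\ell$ must witness a lens of $B_j \cap B_\ell$ that either crosses or nests inside $A_1$. An analogous analysis of the arcs $\alpha_i = A_i \cap \partial B_k$ around some distinguished $B_k$ gives a dual picture. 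The hope is that, combined with the complete bipartite intersection condition between the two families, these two interleaving patterns force a linking or parity invariant that cannot simultaneously accommodate cycles of lengths $2p+1$ and $2q+1$, since odd-length cycles cannot be consistently 2-colored by "left/right of a chord" around a circle.

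The main obstacle I anticipate is rigorously extracting a parity invariant from this interleaving of arcs. A concrete route is to associate to each odd cycle a closed planar curve that threads through the disks of that cycle in order and, between consecutive disks, stays in the lens-shaped intersection of the disk before and after the current one; this lens exists because non-consecutive disks in each cycle pairwise intersect. The two closed curves thus constructed, one per cycle, would then be forced by the all-meets-all bipartite condition to wind around each other nontrivially, and a Jordan-curve / winding-number argument in the plane would contradict the oddness of both cycles. This is in spirit the planar analogue of the Kakeya-motion argument the paper uses later for unit balls in $\mathbb R^3$, and I would expect the planar version to be combinatorially simpler: instead of antipodal curves on $S^2$, one works with chord diagrams on a single auxiliary circle. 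The delicate point, and the bulk of the work, is to set up the invariant so that it is well-defined irrespective of sizes and relative positions of the disks, and to show that it is additive and parity-sensitive along each cycle.
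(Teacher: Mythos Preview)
This theorem is not proved in the present paper at all: it is quoted as a result of Bonnet, Giannopoulos, Kim, Rz\k{a}\.{z}ewski and Sikora \cite{bonnetetal18}, and the paper gives no argument for it. So there is nothing here to compare your proposal against. What the paper \emph{does} prove is the analogous statement for unit ball graphs (Theorem~\ref{thm:no-two-odd-cycles-ubg}), via the antipodal-curves-on-$S^2$ argument you allude to. The authors explicitly remark (Section~\ref{sec:perspectives}) that the disk proof in \cite{bonnetetal18} and their unit-ball proof are ``quite different'' and that a common explanation is an open problem.

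As for the content of your proposal: it is a plan, not a proof, and you say so yourself (``the bulk of the work is to set up the invariant\ldots''). The specific route you suggest---building a closed planar curve for each odd cycle and deriving a contradiction from a winding/linking argument---does not obviously go through. Two closed curves in the plane never link, so a pure Jordan/winding argument needs something more: you would have to identify a point or a curve that one family must encircle an odd number of times and the other family forces to be encircled an even number of times (or vice versa). Your sketch does not isolate such an object, and the M\"obius-normalization step, while natural, does not by itself produce one. The arc-interleaving picture on $\partial A_1$ is suggestive, but odd cycles of length $\geqslant 5$ in $\overline G$ give rise to Helly-type families of pairwise-intersecting arcs on a circle, and extracting a parity obstruction from that alone is exactly the hard part you have not done. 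If you want to pursue this, you should look at the actual argument in \cite{bonnetetal18} rather than try to transplant the $S^2$ proof; the paper's own remarks indicate that transplant is not straightforward.
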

This can be equivalently rephrased as: if $G$ is a disk graph, then $\iocp(\overline{G}) \leqslant 1$.

\paragraph*{Approximation schemes}
A PTAS (Polynomial-Time Approximation Scheme) for a minimization (resp. maximization) problem is an approximation algorithm which takes an additional parameter $\varepsilon>0$ and outputs in time $n^{f(\varepsilon)}$ a solution of value at most $(1+\varepsilon)\text{OPT}$ (resp. at least $(1-\varepsilon)\text{OPT}$) where $\text{OPT}$ is the optimum value.
Observe that from now on, we consider that approximation ratios of maximization problems are smaller than $1$, unlike the convention we used in the introduction. 
An EPTAS (Efficient PTAS) is the same with running time $f(\varepsilon)n^{O(1)}$, an FPTAS (Fully PTAS) is in time $\frac{1}{\varepsilon}^{O(1)}n^{O(1)}$, a QPTAS (Quasi PTAS) is in time $n^{\text{polylog}~n}$ for every $\varepsilon$.
Finally, and this is quite informal and not standard, we call SUBEXPAS (subexponential AS) an approximation scheme with running time $2^{n^{0.99}}$ for every $\varepsilon$.
All those approximation schemes can come deterministic or randomized.

\paragraph*{The class $\cl(d,\beta,i)$}
In the next section, we present a randomized EPTAS and a deterministic PTAS for approximating the independence number $\alpha$ on graphs with constant VC-dimension, linear independence number, and induced odd cycle packing number equals to 1. 

Actually, we extend these algorithms to the case $\iocp(G)=i$, for any constant $i$. Let $\cl(d,\beta,i)$ be the class of simple graphs $G$ satisfying:
\begin{itemize}
\item $\vcdim(G) \leqslant d$,
\item $\alpha(G) \geqslant \beta |V(G)|$, and
\item $\iocp(G) \leqslant i$.
\end{itemize}

For any positive constants $d, \beta<1, i$, we get a deterministic PTAS and a randomized EPTAS for \mis on $\cl(d,\beta,i)$.
  
\section{EPTAS for \mis on $\cl(d,\beta,i)$}\label{sec:eptas}

We start by showing that $\cl(d,\beta,1)$ has a randomized EPTAS.

\eptas*

\begin{proof}

Let $H$ be a graph in $\cl(d,\beta,1)$ with $n$ vertices and $I$ be a maximum independent set of $H$.
In particular, $|I| \geqslant \beta n$.
Since finding a maximum independent set in a bipartite graph can be done in polynomial time, we get the desired $(1-\varepsilon)$-approximation if we can find a vertex-set $T$ such that: 
\begin{itemize}
\item $T$ is an \emph{odd-cycle transversal}, i.e., its removal yields a bipartite graph, and 
\item $|T \cap I|\leqslant \varepsilon |I|$.
\end{itemize}

At high level, our algorithm will thus select and remove some odd-cycle transversals $T$, and then apply the bipartite case algorithm.
We will do this at least once for a set $T$ that satisfies the second item, with some strong guarantee.
Of course the key ingredient in finding a suitable odd-cycle transversal is the fact that $\iocp(H) \leqslant 1$.
Indeed, this implies that for any odd cycle $C$, the set $N[C]$ is an odd-cycle transversal.

Let $c := 8(\frac{1}{(\beta \varepsilon)^2}+\frac{1}{\beta \varepsilon}+1) = O(1/\varepsilon^2)$, $\delta := \frac{\varepsilon}{c}=O(\varepsilon^3)$, and $s:=\frac{10 d}{\delta} \log \frac{1}{\delta}$.
We call \emph{short odd cycle} an odd cycle of length at most $c$, and \emph{long odd cycle} an induced odd cycle of length more than~$c$.
First we can assume that $\beta n$ is larger than $2s$, otherwise we can find an optimum solution by brute-force in time $2^n=2^{\tilde{O}(1/\varepsilon^3)}$.
Hence, $|I| > 2s$.

\begin{claim}\label{clm:vc-dim}
There exists a subset $S\subseteq I$ of size $s=\frac{10 d}{\delta} \log \frac{1}{\delta}$ such that $N(S)$ contains all vertices that have more than $\delta|I|$ neighbors in $I$. 
\end{claim}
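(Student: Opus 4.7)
The plan is to apply the $\varepsilon$-net theorem of Haussler and Welzl to a range space naturally associated with $I$. Define the hypergraph $\mathcal H$ on ground set $I$ whose hyperedges are the traces of neighborhoods on $I$, namely $\{N_H(v) \cap I : v \in V(H)\}$. Since restricting the ground set of a hypergraph can only decrease the VC-dimension, $\vcdim(\mathcal H) \leqslant \vcdim(H) \leqslant d$. The conclusion of the claim can be rephrased as: find $S \subseteq I$ of size $s$ that is a $\delta$-net for $\mathcal H$ under the uniform measure on $I$, i.e. a set intersecting every hyperedge $N_H(v) \cap I$ with $|N_H(v) \cap I| > \delta |I|$.

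I would then invoke the Haussler--Welzl $\varepsilon$-net theorem: for any range space of VC-dimension at most $d$, a uniformly random subset of the ground set of size at least $\frac{c \cdot d}{\delta}\log \frac{1}{\delta}$ (for a suitable absolute constant $c$) is a $\delta$-net with positive probability, in fact with probability going to $1$. Taking a sample $S$ of $I$ of size $s = \frac{10 d}{\delta}\log \frac{1}{\delta}$, which exceeds the Haussler--Welzl threshold, there exists a choice of $S$ that is a $\delta$-net. By definition of a $\delta$-net, every vertex $v \in V(H)$ satisfying $|N_H(v) \cap I| > \delta |I|$ has at least one neighbor in $S$, so $v \in N(S)$, which is exactly the conclusion of the claim.

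There is no real obstacle: the main point is simply to recognize the claim as an existence statement for an $\varepsilon$-net of the right size on a hypergraph of VC-dimension at most $d$. Two small things to verify along the way are that (i) restricting the neighborhood hypergraph of $H$ to the ground set $I$ does not increase VC-dimension (immediate from the definition of shattering), and (ii) the constant $10$ in $s$ is indeed comfortably above the constant needed in the Haussler--Welzl bound; both are routine. Note that the claim is purely existential, so no algorithmic construction of $S$ is needed at this stage --- in the subsequent algorithmic argument, one will simply enumerate or randomly sample $s$-subsets of $V(H)$ (which with good probability lie in $I$, since $|I| \geqslant \beta n$), and this is what leads to the $2^{\tilde O(1/\varepsilon^3)}n^{O(1)}$ running time.
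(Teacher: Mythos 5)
Your proposal is correct and is essentially the same argument as in the paper: both restrict the neighborhood hypergraph to the ground set $I$, observe that the VC-dimension does not increase, and invoke the Haussler--Welzl $\varepsilon$-net theorem to obtain a $\delta$-net of size $O\bigl(\frac{d}{\delta}\log\frac{1}{\delta}\bigr)$ hitting every large trace $N(v)\cap I$. The only cosmetic difference is that the paper keeps only the large hyperedges (those with $|N(v)\cap I|\geqslant\delta|I|$) and asks for a hitting set of the resulting hypergraph, while you keep all traces and ask for a $\delta$-net; these are equivalent formulations.
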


\begin{proof}
  Let $A$ denote the set of vertices $v$ such that $|N(v) \cap I| \geqslant \delta |I|$.
  We define the hypergraph $K:=(I,\{N(v) \cap I, v \in A\})$.
  By assumption on $H$, the hypergraph $K$ has VC-dimension $d$.
  By definition of $K$, all its edges have size at least $\delta |V(K)|$.
  A celebrated result in VC-dimension theory by Haussler and Welzl~\cite{HausslerW86}, later improved by Blumer \emph{et al.}~\cite{Blumer89}, ensures that every such hypergraph $K$ admits a hitting set (a set of vertices that intersects every edge) of size at most $\frac{10 d}{\delta} \log \frac{1}{\delta}$.
\end{proof}

Algorithmically, we have two ways of selecting the set $S$, leading to a deterministic PTAS or a randomized EPTAS.
Either we run the rest of the algorithm for every subset of $V(H)$ of size $\frac{10 d}{\delta} \log \frac{1}{\delta}$ inducing an independent set (which constitutes $n^{f(\varepsilon)}$ possible sets), or we use another result proven in \cite{HausslerW86}: not only the hitting set exists but a uniform sample of $V(K)$ of size $\frac{10 d}{\delta} \log \frac{1}{\delta}$ is a hitting set with high probability.
So we do the following $t := \lceil \frac{\log(10^{-10})}{\log(1-(\beta/2)^s)} \rceil = 2^{\tilde{O}(1/\varepsilon^3)}$ many times: we select uniformly at random a set $S$ of size $s=\frac{10 d}{\delta} \log \frac{1}{\delta}$, and continue the rest of the algorithm if $S$ is an independent set.
Since $|I| > 2s$, it holds that $\Pr(S \subseteq I) > (\beta/2)^s$.
As we try out $t$ samples, at least one satisfies $S \subseteq I$ with probability at least $1-(1-(\beta/2)^s)^t \geqslant 1-10^{-10}$.\\
  
We now assume that the sample $S$ satisfies the properties of Claim~\ref{clm:vc-dim}.
We start by putting in $T$ all the vertices of $N(S)$ (note that no such vertex is in $I$ since $I$ is an independent set).
We define the graph $H':=H-N(S)$.
We got rid of the vertices with at least $\delta |I|$ neighbors in $I$: in $H'$, there are no such vertices anymore.
We want to find an odd-cycle transversal in $H'$ that has few vertices in $I$. 

We now run a polynomial-time algorithm (see for instance \cite{AlonYZ97}) that determines whether the graph is bipartite and, if not, outputs a shortest odd cycle $C_{\text{og}}$ in $H'$.
\begin{itemize}
\item If $H'$ is bipartite, then $T:=N_H(S)$ is an odd-cycle transversal of $H$ with $|T \cap I|=0$.
\item If $g=|C_{\text{og}}| \leqslant c$, that is, if $C_{\text{og}}$ is a short odd cycle, then $|N_{H'}(C_{\text{og}}) \cap I| \leqslant c \delta |I| = \varepsilon|I|$, and therefore $T:=N_H(S)\cup N_{H'}[C_{\text{og}}]$ is an odd-cycle transversal of $H$ with $|T\cap I|\leqslant \varepsilon |I|$.
\end{itemize}

We can now safely assume that $g > c$, i.e., $C_{\text{og}}$ is a long odd cycle.
We decompose $H'$ into the successive neighborhoods of $C_{\text{og}}$, which we call \emph{layers}.
We define the first layer as $L_1 := N_{H'}(C_{\text{og}})$.
We define by induction the other layers as the non-empty sets $L_i := \{ v $ $|$ there exists $ u \in L_{i-1}$ with $uv \in E(H')$ and $v \notin L_j$ for $j<i \}$.
Let us denote by $\lambda$ the index of the last non-empty layer.
Before entering into the formal details of the second part of the proof let us briefly explain its structure:
\begin{itemize}
\item First, we observe that if there are many layers, there is one with index at most $\frac{2}{\beta \varepsilon}$ that contains at most $\frac{\varepsilon \beta}{2} n \leqslant \frac{\varepsilon}{2}|I|$ of the vertices. We can thus delete this layer, and note that connected components that do not contain $C_{\text{og}}$ are bipartite.
  We then focus on the component containing $C_{\text{og}}$, which has only few layers.
 \item Secondly, we show that this component admits an odd-cycle transversal of size at most $\frac{\varepsilon}{2} |I|$ (informally, the neighborhood at distance up to $O(\frac{1}{\varepsilon})$ of $O(\frac{1}{\varepsilon})$ consecutive vertices on the cycle $C_{\text{og}}$).
\end{itemize}

In other words, we can find $\varepsilon |I|$ vertices whose deletion yields a bipartite graph (see Figure~\ref{fig:large-successive-neighborhoods}), which together with $N(S)$ form the desired odd-cycle transversal.

\begin{figure}[!ht]
\centering
\begin{tikzpicture}[scale=0.5,vertex/.style={draw,circle,inner sep=-0.02cm}]
\def\s{41}
\def\l{6}
\def\ll{13}
\def\hd{0.2}
\def\vd{1}
\foreach \i in {1,...,\s}{
\node[vertex] (v\i) at (0,\i * \hd) {} ;
}
\foreach \i [count=\j from 1] in {2,...,\s}{
  \draw (v\i) -- (v\j) ;
}
\path (v1) edge[bend left=10] (v\s) ;
\foreach \j in {1,...,\ll}{
  \foreach \i in {1,...,\s}{
\draw (\vd * \j - \vd / 3, \i * \hd - \hd / 2) rectangle (\vd * \j + \vd / 3, \i * \hd + \hd / 2) ; 
}
}

\foreach \i in {8,...,16}{
  \node[vertex,fill=red] at (v\i) {} ; 
 \foreach \j in {1,...,\l}{
\draw[fill=red,opacity=0.5] (\vd * \j - \vd / 3, \i * \hd - \hd / 2) rectangle (\vd * \j + \vd / 3, \i * \hd + \hd / 2) ; 
}
}
\foreach \j in {7}{
  \foreach \i in {1,...,\s}{
\draw[fill=blue,opacity=0.5] (\vd * \j - \vd / 3, \i * \hd - \hd / 2) rectangle (\vd * \j + \vd / 3, \i * \hd + \hd / 2) ; 
}
}
\end{tikzpicture}
\caption{The layers (columns) and the strata (rows of a column).
  If the number of successive neighborhoods is large, a small cutset (in blue) is found among the first $\lceil \frac{2}{\beta \varepsilon} \rceil$ layers. To the right of this cutset, we know that the graph is bipartite. This brings us back to the case with fewer than $\frac{2}{\beta \varepsilon}$ layers, where we can find a small odd cycle transversal (in red).}
\label{fig:large-successive-neighborhoods}
\end{figure}
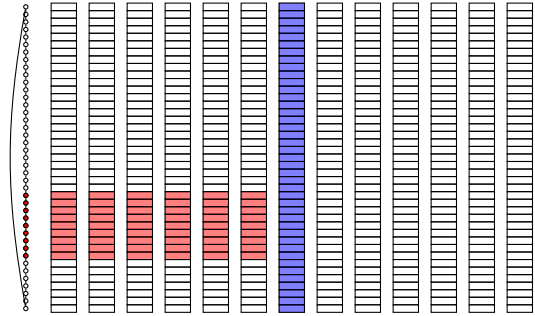

%

If $\lambda > \frac{2}{\beta \varepsilon}$, then there is some index $i \leqslant \lceil \frac{2}{\beta \varepsilon} \rceil$ such that $L_i$ is of size at most $\frac{\beta \varepsilon}{2} n \leqslant \frac{\varepsilon}{2} |I|$. 
We remove that layer $L_i$ from the graph.
Since $\iocp(H')=1$, the set $\bigcup_{i < j \leqslant \lambda}L_j$ induces a bipartite graph.
Indeed, it is disjoint from the closed neighborhood of the odd cycle $C_{\text{og}}$.
We can easily find a maximum independent set on this part of the graph, and focus on the other part, which is $C_{\text{og}} \cup \bigcup_{1 \leqslant j < i}L_j$.
We set $H'' := H'[C_{\text{og}} \cup \bigcup_{1 \leqslant j < i}L_j]$. If  $\lambda \leqslant \frac{2}{\beta \varepsilon}$, we set $H'':=H'$.

So the graph $H''$ has at most $\frac{2}{\beta \varepsilon}$ layers emanating from $C_{\text{og}}$.
We will find an odd-cycle transversal of size at most $\frac{\varepsilon}{2} |I|$.
We first need some new definitions.

For $1 \leqslant j \leqslant g$, let $S_j$ be the set of vertices $w \in V(H')$ such that there is a shortest path from $w$ to $C_{\text{og}}$ which ends in $v_j$, while no shortest path from $w$ to $C_{\text{og}}$ ends in $v_i$ with $i < j$ (note that $v_j \in S_j$).
We point out that the sets $(S_1,\ldots,S_g)$ induce a partition of each layer $L_k$.
This simply follows from the fact that for every vertex $w \in L_k$, there is a minimum index $j(w)$ such that there is a shortest path from $v$ to $C_{\text{og}}$ ending in $v_{j(w)}$.
For each pair $(k,\ell)$, we define a \emph{stratum} as $L_k^\ell := S_\ell \cap L_k$.
Note that if $L_k^\ell = \emptyset$, then for any $k'>k$, $L_{k'}^\ell = \emptyset$.

Let $z := \lceil \frac{4}{\beta \varepsilon} \rceil + 2$ and for any integer $\gamma$ such that $0 \leqslant \gamma \leqslant \frac{g}{z}-1$, let $S^\gamma := \underset{\gamma z+1 \leqslant j \leqslant (\gamma+1)z}{\bigcup}S_j$.
Informally, $S^\gamma$ consists of the layers emanating from $z$ consecutive vertices of $C_{\text{og}}$.
Note that if $\gamma \neq \gamma'$, then $S^{\gamma}$ and $S^{\gamma'}$ are disjoint.

\begin{claim}\label{clm:bip}
  For any non-negative integer $\gamma \leqslant \frac{2}{\beta \varepsilon}$, the graph $B := H''- S^\gamma$ is bipartite.
\end{claim}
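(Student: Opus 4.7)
The plan is to exhibit an explicit proper $2$-coloring $\phi$ of $B$. Let $P := C_{\text{og}} - \{v_{\gamma z + 1}, \ldots, v_{(\gamma + 1)z}\}$; this is a path since the deleted vertices form a contiguous arc. Fix any proper $2$-coloring $p \colon V(P) \to \{0,1\}$. Setting $L_0 := V(C_{\text{og}})$, I define $\phi(w) := (p(v_{\ell}) + k) \bmod 2$ for $w \in L_k^{\ell} \cap V(B)$; this is well-defined because $w \notin S^{\gamma}$ forces $\ell \notin [\gamma z + 1, (\gamma+1)z]$, and hence $v_{\ell} \in V(P)$.

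To check properness on an edge $ww'$ with $w \in L_k^{\ell}$ and $w' \in L_{k'}^{\ell'}$ (WLOG $k' \in \{k, k+1\}$), I combine shortest paths $P_w$ (from $w$ to $v_{\ell}$, length $k$) and $P_{w'}$ (from $w'$ to $v_{\ell'}$, length $k'$) --- these stay inside $H''$ because intermediate layers do --- together with the edge $ww'$ and one of the two arcs $A_1, A_2$ of $C_{\text{og}}$ joining $v_{\ell}$ and $v_{\ell'}$, of respective lengths $d_1 := |\ell - \ell'|$ and $d_2 := g - d_1$. This yields two closed walks in $H''$ of lengths $k + k' + d_1 + 1$ and $k + k' + d_2 + 1$, whose sum $2(k+k') + g + 2$ is odd, so exactly one is of odd length. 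Since an odd closed walk contains an odd cycle no longer than itself and $C_{\text{og}}$ is a shortest odd cycle in $H'' \subseteq H'$, that odd walk has length at least $g$. Meanwhile, properness of $\phi$ on $ww'$ rephrases as $\Delta(\ell, \ell') + k + k' \equiv 1 \pmod 2$, where $\Delta(\ell, \ell')$ is the distance in $P$ between $v_\ell$ and $v_{\ell'}$; and since the removed arc is contiguous, $\Delta \in \{d_1, d_2\}$, with $\Delta = d_2$ exactly when $A_1$ contains the removed arc (which forces $d_1 \geqslant z + 1$), and $\Delta = d_1$ otherwise (in which case $A_1 \subseteq P$, forcing $d_1 \leqslant g - z - 1$).

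A brief case analysis leaves only two configurations that could violate properness: $\Delta = d_1$ with $d_1 \geqslant g - (k+k') - 1$ and $d_1 \leqslant g - z - 1$, or $\Delta = d_2$ with $d_1 \leqslant k + k' + 1$ and $d_1 \geqslant z + 1$; both collapse to $k + k' \geqslant z$. But the choice $z = \lceil 4/(\beta\varepsilon) \rceil + 2$, together with the BFS-depth bound $k, k' \leqslant \lceil 2/(\beta\varepsilon) \rceil$ inherited from the construction of $H''$, yields
\[
k + k' \;\leqslant\; 2\lceil 2/(\beta\varepsilon) \rceil \;\leqslant\; \lceil 4/(\beta\varepsilon) \rceil + 1 \;=\; z - 1 \;<\; z,
\]
a contradiction. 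The degenerate situation $\ell = \ell'$ with $k = k'$ would correspond to an edge producing the odd closed walk $P_w + \overline{P_{w'}} + w'w$ of length $2k + 1$, containing an odd cycle of length at most $2k + 1 < g$ (since $g > c \gg 2\lceil 2/(\beta\varepsilon) \rceil + 1$), contradicting the minimality of $C_{\text{og}}$. Hence $\phi$ is a proper $2$-coloring and $B$ is bipartite. The main delicate point is the quantitative calibration $z > 2\lceil 2/(\beta\varepsilon) \rceil$: this strict gap between the width of the removed arc and the BFS depth is precisely what forces every bad parity configuration to contradict the shortest-odd-cycle hypothesis.
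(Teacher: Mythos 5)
Your proof is correct, and it establishes the claim using the same $2$-coloring as the paper ($\phi(w) := p(v_\ell) + k \bmod 2$ on stratum $L_k^\ell$), but the verification of properness is organized more economically. The paper first proves three preliminary facts — each stratum is independent, layers two apart have no edges, and strata whose indices $\ell,\ell'$ are far apart (cyclic distance $\geqslant 4/(\beta\varepsilon)+1$) have no edges — and only then handles the remaining case by showing the ``small interval'' $\text{si}(\ell,\ell')$ lies entirely in $B$ and closing a short odd cycle through it. Your argument sidesteps most of this: you consider both closed walks (through $A_1$ of length $d_1$ and $A_2$ of length $d_2 = g - d_1$), observe that exactly one has odd length since $W_1 + W_2 = 2(k+k') + g + 2$ is odd, translate failure of properness into $\Delta + k + k' \equiv 0 \pmod 2$ with $\Delta \in \{d_1, d_2\}$ determined by which side the removed slab sits on, and show that each of the two sign choices collapses to $k + k' \geqslant z$. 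That contradicts the BFS-depth bound $k, k' \leqslant \lceil 2/(\beta\varepsilon) \rceil$, using the deliberate slack $z = \lceil 4/(\beta\varepsilon) \rceil + 2 > 2\lceil 2/(\beta\varepsilon) \rceil$. The underlying mechanism (shortest odd cycle forces any odd closed walk in $H'$ to have length $\geqslant g$; the width of the removed slab beats the BFS depth) is identical to the paper's, but you fuse the ``same stratum,'' ``far strata,'' and ``close strata, same/adjacent layers'' cases into a single parity computation and avoid introducing $\text{si}(\ell,\ell')$ altogether. This is a genuine and worthwhile simplification of the write-up, though not of the idea.
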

\begin{proof}

  Observe that $\frac{g}{z}-1 \geqslant \frac{2}{\beta \varepsilon}$, so each $S^\gamma$ of the claim is well-defined.
  It holds that $C_{\text{og}} \cap S^\gamma = \bigcup_{\gamma z+1 \leqslant j \leqslant (\gamma+1)z}\{v_j\}$.
  We exhibit a proper $2$-coloring of $B$ by coloring its vertices as follows.
  We start by coloring each vertex of the path $C_{\text{og}} \setminus S^\gamma$ in an alternated fashion, i.e., one endpoint of the path gets color 0, its neighbor gets color 1, the next vertex gets color 0, and so on. 

  For each pair $(k,\ell)$ such that $1 \leqslant k < i$, $1 \leqslant \ell \leqslant g$, and $\ell \notin [\gamma z + 1, (\gamma+1)z]$, we color all the vertices in the stratum $L_k^\ell = S_\ell \cap L_k$ with the opposite color of the one used for the stratum $L_{k-1}^\ell = S_\ell \cap L_{k-1}$ 
  (with the convention that $L_0 := C_{\text{og}}$).
  This process colors unambiguously all the vertices of $B$.

  Let us prove that the resulting coloring is proper.
  First note that the vertices of a same stratum form an independent set.
  Indeed, assume by contradiction that $L_k^\ell$ contains an edge $uw$.
  There is a shortest path $P_1$ from $v_\ell$ to $u$ and a path $P_2$ from $w$ to $v_\ell$.
  Since $u$ and $w$ are in the same layer $L_k$, $P_1$ and $P_2$ have the same length; more precisely, $|P_1|=|P_2|=k < i \leqslant \frac{2}{\beta \varepsilon}$.
  Thus, $P_1, uw, P_2$ defines a closed walk with $2k+1$ edges.
  An odd closed walk of length $2k+1$ implies the existence of an odd induced cycle of length at most $2k+1$.
  As $2k+1 < \frac{4}{\beta \varepsilon}+1 < g$, we reach a contradiction on the minimality of $C_{\text{og}}$.
  
  There is no edge between a stratum $L_k^\ell$ and a stratum $L_{k'}^{\ell'}$ with $|k-k'| \geqslant 2$, by definition of the layers.
  Moreover, for $1 \leqslant \ell < \ell' \leqslant g$, there is no edge $uw$ with $u \in L_k^\ell$ and $w \in L_{k'}^{\ell'}$ with $\min(\ell'-\ell,\ell+g-\ell') \geqslant \frac{4}{\beta \varepsilon}+1$ since otherwise it would be possible to construct an odd cycle strictly shorter than $C_{\text{og}}$.
  Indeed if $P_1$ is a shortest path between $v_\ell$ and $u$ and $P_2$ is a shortest path between $w$ and $v_{\ell'}$.
  Then, $P_1, uw, P_2$ is a walk of length $k+k'+1$.
  However, a shortest path between $v_\ell$ and $v_{\ell'}$ within $C_{\text{og}}$ has length $\min(\ell'-\ell,\ell+g-\ell') \geqslant \frac{4}{\beta \varepsilon}+1 > k+k'+1$.
  Hence, the walk $P_1, uw, P_2$ can be extended into an odd closed walk of length strictly smaller than $g$, by taking the path from $v_{\ell'}$ to $v_\ell$ in $C_{\text{og}}$ with the same parity as $k+k'$; a contradiction.

  Therefore, if there is a monochromatic edge $uw$ in $B$, it must be between $L_k^\ell$ and $L_{k'}^{\ell'}$ with $|k-k'| \in \{0,1\}$, $\min(\ell'-\ell,\ell+g-\ell') < \frac{4}{\beta \varepsilon}+1$, and $\{\ell,\ell'\} \cap [\gamma z + 1, (\gamma+1) z] = \emptyset$.
  We fix $k, k', \ell, \ell'$ satisfying those conditions.
  We call \emph{small interval of $\ell$ and $\ell'$}, denoted by $\text{si}(\ell,\ell')$, the integer interval $[\ell,\ell']$ if $\min(\ell'-\ell,\ell+g-\ell')=\ell'-\ell$ and $[\ell',g] \cup [1,\ell]$ if $\min(\ell'-\ell,\ell+g-\ell')=\ell+g-\ell'$.
  What we showed in the previous paragraph implies that if $\{\ell,\ell'\} \cap [\gamma z + 1, (\gamma+1) z] = \emptyset$, then $\text{si}(\ell,\ell') \cap [\gamma z + 1, (\gamma+1) z] = \emptyset$.
  Indeed, the small interval of $\ell$ and $\ell'$ is a circular interval over $[1,g]$ of length less than $\frac{4}{\beta \varepsilon}+1 < z$.
  In particular, the vertices of $C_{\text{og}}$ indexed by the small interval of $\ell$ and $\ell'$ are all in $B$.
  
  Assume first that $k=k'$.
  There is a path $P_1$ from $v_\ell$ to $u$, and a path $P_2$ from $w$ to $v_{\ell'}$, both of length $k$.
  Since by assumption the color for $L_k^\ell$ is the same as the color for $L_{k'}^{\ell'}$, the vertices $v_\ell$ and $v_\ell'$ have the same color (by construction of the 2-coloring).
  Thus we have, in $C_{\text{og}}-S^\gamma$, a path $P$ indexed by $\text{si}(\ell,\ell')$ from $v_{\ell'}$ to $v_\ell$ of even length less than $\frac{4}{\beta \varepsilon}+1$.
  We emphasize that it is crucial that $\text{si}(\ell,\ell') \cap [\gamma z + 1, (\gamma+1) z] = \emptyset$ (meaning that all the vertices of $C_{\text{og}}$ indexed by $\text{si}(\ell,\ell')$ are still in $B$), to deduce that there is a path of even length between $v_\ell$ and $v_{\ell'}$. It follows from the mere fact that $v_\ell$ and $v_\ell'$ have the same color.   
  Finally, the concatenation $P_1, uw, P_2, P$ yields an odd cycle of odd length less than $2k+1+\frac{4}{\beta \varepsilon}+1 \leqslant \frac{8}{\beta \varepsilon}+2 < g$.

  Now let us assume that $|k-k'|=1$; say, without loss of generality, $k'=k+1$.
  In that case, by construction of the 2-coloring, the edge can only be monochromatic if $v_\ell$ and $v_{\ell'}$ receive distinct colors.
  Furthermore, there is a path $P_1$ from $v_\ell$ to $u$, and a path $P_2$ from $w$ to $v_{\ell'}$ with length of distinct parities ($k$ and $k+1$, respectively).
  Moreover, since $v_\ell$ and $v_{\ell'}$ get distinct colors, there is in $C_{\text{og}}-S^\gamma$, a path $P$ indexed by $\text{si}(\ell,\ell')$ from $v_{\ell'}$ to $v_\ell$ of odd length at most $\frac{4}{\beta \varepsilon}+1$.
  Again, we crucially use that all the vertices of $C_{\text{og}}$ indexed by $\text{si}(\ell,\ell')$ are in $B$, to deduce that $P$ is of odd length from the fact that $v_\ell$ and $v_{\ell'}$ get distinct colors. 
  Finally, the concatenation $P_1, uw, P_2, P$ is an odd cycle of length less than $k+1+k+1+\frac{4}{\beta \varepsilon}+1 \leqslant \frac{8}{\beta \varepsilon}+3 < g$; a contradiction.

We conclude that the $2$-coloring is indeed proper.
\end{proof}

\begin{algorithm}
  \caption{EPTAS for \mis on $\cl(d,\beta,1)$}
    \label{alg:eptas}
  \begin{algorithmic}[1]
    \Require{$H$ satisfies $d := \vcdim(G)=O(1)$, $\alpha(G) \geqslant \beta |V(G)|$, and $\iocp(G) \leqslant 1$}
    \Function{Stable}{$H,\varepsilon$}:
    \Let{$c$}{$8(1/(\beta \varepsilon)^2+1/(\beta \varepsilon)+1)$}
    \Let{$\delta$}{$\frac{\varepsilon}{c}$}
    \Let{$s$}{$\frac{10d}{\delta} \log{\frac{1}{\delta}}$}
    \If{$\beta |V(H)| < 2s$}
       solve $H$ optimally by brute-force                          \Comment{$|V(H)| = \tilde{O}(1/\varepsilon^3)$}
    \EndIf
      \For{$\_ \gets 1 \textrm{ to } t=2^{\tilde O(1/\varepsilon^3)}$}{
        \Let{$S$}{\text{uniform sample of $V(G)$ of size $s$}}  \Comment{$S \subseteq I$ with probability $> (\frac{\beta}{2})^s$}
         \If{$G[S]$ contains an edge} break and go to the next iteration \EndIf  
         \Let{$H'$}{$H-N[S]$}                                    \Comment{remove $S$ and its neighborhood}  
         \Let{$C_{\text{og}}$}{shortest odd cycle in $H'$}       \Comment{in polynomial time \cite{AlonYZ97}}
         \Let{$g$}{$|C_{\text{og}}|$}                            \Comment{$C_{\text{og}}=v_1v_2 \ldots v_g$}
         \If{$g \leqslant c$}                  \Comment{short odd cycle}
         \Let{$S$}{$S~\cup~$ max stable on the bipartite $H'-N[C_{\text{og}}]$} \Comment{$\iocp(G)=1$}
         \Else{}                                              \Comment{long odd cycle}
           \Let{$L_\ell$}{vertices of $H'$ at distance exactly $\ell$ from $C_{\text{og}}$}   
           \Let{$L_i$}{smallest layer among $\{L_\ell\}_{1 \leqslant \ell \leqslant \lceil 2/{\beta \varepsilon} \rceil}$} \Comment{$|L_i| \leqslant \frac{\varepsilon}{2}\alpha(G)$}
           \Let{$H''$}{$H'[C_{\text{og}} \cup \bigcup_{1 \leqslant j < i}L_j]$}   
           \Let{$S_k$}{vertices of $H''$ whose closest vertex on $C_{\text{og}}$ of minimum index is $v_k$}   
           \Let{$z$}{$\lceil \frac{4}{\beta \varepsilon} \rceil + 2$}
           \Let{$S^\gamma$}{smallest set among $\{\bigcup_{k \in [\gamma z+1,(\gamma + 1)z]}S_k\}_{\gamma \in [0,\lfloor \frac{2}{\beta \varepsilon} \rfloor]}$}                \Comment{$|S^\gamma| \leqslant \frac{\varepsilon}{2}\alpha(G)$}
           \Let{$S$}{$S~\cup~$ max stable on the bipartite $H'[\bigcup_{j > i}L_j]$}    \Comment{$\iocp(G)=1$}
           \Let{$S$}{$S~\cup~$ max stable on the bipartite $H''-S^\gamma$}              \Comment{Claim~\ref{clm:bip}}
         \EndIf
      }
      \EndFor
      \State \Return{$S$ at the iteration maximizing its cardinality}
      \EndFunction
   \Ensure{output $S$ is a stable set of size at least $(1-\varepsilon)\alpha(G)$ with high probability}
  \end{algorithmic}
\end{algorithm}

Since the sets of $\{S^\gamma\}_{\gamma \in [0, \lfloor \frac{2}{\beta \varepsilon} \rfloor]}$ are pairwise disjoint, a smallest set of the collection satisfies $|S^\gamma| \leqslant \frac{\beta \varepsilon}{2}n \leqslant \frac{\varepsilon}{2} |I|$.
By Claim~\ref{clm:bip}, removing this $S^\gamma$ from $H''$ makes the graph bipartite.
We finally compute a maximum independent set in polynomial time in $H''- S^\gamma$.
We return the best solution found.
The pseudo-code is detailed in Algorithm~\ref{alg:eptas}.
\end{proof}

\disk*

\begin{proof}
  With a geometric representation, we could invoke the following argument to get a linear maximum stable set. 
  Recall that the piercing number of a family of geometric objects is the minimum number of points such that each object contains at least one of those points.
  The piercing number of a collection of pairwise intersecting disks in the plane is $4$ \cite{stacho,danzer}.
  The number of faces in an arrangement of $n$ circles (disk boundaries) is $O(n^2)$, and all the points within one face hits the same disks.
  In time $O(n^8)$, one can therefore exhaustively guess four points piercing a maximum clique $\mathcal C$.
  We can remove all the disks which are not hit by any of those four points, since they are not part of $\mathcal C$.
  This new instance $G$ can have its vertices partitioned into four cliques, hence $\alpha(\overline{G}) \geqslant |V(G)|/4$.

  Without a geometric representation, we suggest the following.
  Disk graphs are $6\omega$-degenerate (and closed by induced subgraphs), i.e., there is a vertex of degree at most $6\omega$, where $\omega$ is the clique number of the graph.
  Furthermore, the neighborhood of this vertex can easily be partitioned into $6$ cliques.
  We can find such a vertex in polynomial time.
  We branch on two outcomes.
  Either this vertex is in a maximum clique: we run the approximation of Theorem~\ref{thm:eptas} on its closed neighborhood $G$ which satisfies $\alpha(\overline G) \geqslant |V(G)|/6$.
  Or this vertex is not in any maximum clique: we delete it from the graph.
  Our branching tree has size $2n+1$, so it only costs an extra linear multiplicative factor. 

  The VC-dimension of the neighborhoods of disk graphs, and even pseudo-disk graphs \cite{Aronov18}, is at most~$4$.
  Since the VC-dimension of a graph is equal to the one of its complement, the VC-dimension of $\overline G$ is also at most $4$.
  Finally, by Theorem~\ref{thm:disk-obstruction} \cite{bonnetetal18}, $\iocp(\overline{G}) \leqslant 1$.
  We only call the approximation algorithm (a polynomial number of times) with disk graphs $G$ such that $\overline{G} \in \cl(4,\frac{1}{6},1)$ (argument without the geometric representation) or $\overline{G} \in \cl(4,\frac{1}{4},1)$ (argument with the geometric representation).
  Hence, we conclude by Theorem~\ref{thm:eptas}.
\end{proof}

It is known that the VC-dimension of unit balls and the piercing number of pairwise intersecting unit balls are both constant.
In Section~\ref{sec:higher}, we will show that the induced odd cycle packing number of unit ball graphs is at most $1$.
This is the missing element for the EPTAS to also work for \cli on unit balls.

We conclude this section by extending the EPTAS to work for constant (not necessarily~$1$) induced odd cycle packing number.

\begin{theorem}\label{thm:eptas-bis}
  For any constants $d, i \in \mathbb N$, $0 < \beta \leqslant 1$, for every $\varepsilon > 0$, there is a randomized $(1-\varepsilon)$-approximation algorithm running in time $2^{\tilde{O}({1/\varepsilon}^3)}n^{O(1)}$ for \mis on graphs of $\cl(d,\beta,i)$ with $n$ vertices.
\end{theorem}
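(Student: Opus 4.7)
The plan is to prove Theorem~\ref{thm:eptas-bis} by induction on the constant $i$. The base case $i=1$ is Theorem~\ref{thm:eptas}, already established (and $i=0$ is just polynomial-time \mis on bipartite graphs). For the inductive step with $i \geqslant 2$, I set $\varepsilon' := \varepsilon/2$ and invoke the transversal-finding portion of the algorithm of Theorem~\ref{thm:eptas} with parameter $\varepsilon'$ on the input $H \in \cl(d,\beta,i)$: sample $S$ via the $\varepsilon$-net argument of Claim~\ref{clm:vc-dim}, compute a shortest odd cycle $C_{\text{og}}$ in $H - N(S)$, and let $T'$ be the transversal produced ($N(S) \cup N[C_{\text{og}}]$ in the short cycle case, $N(S) \cup L_\ell \cup S^\gamma$ in the long cycle case). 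Exactly as in the original proof, with high probability $|T' \cap I| \leqslant \varepsilon' |I|$, where $I$ is a fixed maximum independent set of $H$. Rather than invoking bipartite MIS on $H - T'$, which is no longer justified, I would recursively call the inductive algorithm with parameter $\varepsilon'$ on $H - T'$.

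The central new observation is that $\iocp(H - T') \leqslant i-1$. Take a witness $C_1,\ldots,C_k$ of $\iocp(H-T')=k$, i.e., pairwise vertex-disjoint induced odd cycles of $H - T'$ with no edge between any two. I will argue that $C_{\text{og}}, C_1, \ldots, C_k$ witnesses $\iocp(H) \geqslant k+1$, forcing $k \leqslant i-1$. In the short case, $N_H[C_{\text{og}}] \subseteq T'$, so every $C_j \subseteq V(H) \setminus N_H[C_{\text{og}}]$ is vertex-disjoint from and has no edge to $C_{\text{og}}$. In the long case, $H - T'$ decomposes into a close part $(C_{\text{og}} \cup \bigcup_{j<\ell} L_j) \setminus S^\gamma$, a far part $\bigcup_{j>\ell} L_j$, and the remaining components of $H - N(S)$ that do not reach $C_{\text{og}}$. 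A careful re-reading of Claim~\ref{clm:bip} confirms its proof uses only the minimality of $C_{\text{og}}$ and never $\iocp \leqslant 1$; hence the close part is bipartite and hosts no $C_j$. Any $C_j$ in the far part is disjoint from $N_H[C_{\text{og}}]$, since that neighborhood is contained in $N(S) \cup C_{\text{og}} \cup L_1$, which lies in $T' \cup$ close part. Any $C_j$ in a further component of $H - N(S)$ is non-adjacent to $C_{\text{og}}$ in $H$ by definition of connected components of an induced subgraph.

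With $\iocp(H - T') \leqslant i-1$ in hand, and noting that $\vcdim$ is monotone under taking induced subgraphs while $\alpha(H - T')/|V(H-T')| \geqslant (1-\varepsilon/2)\beta \geqslant \beta/2$, the graph $H - T'$ lies in $\cl(d, \beta/2, i-1)$. The recursive call then returns an independent set of size at least $(1-\varepsilon/2)\alpha(H-T') \geqslant (1-\varepsilon/2)^2|I| \geqslant (1-\varepsilon)|I|$. The running-time recurrence $T(\varepsilon,i) = T(\varepsilon/2, i-1) + 2^{\tilde O(1/\varepsilon^3)} n^{O(1)}$ telescopes to $2^{\tilde O(1/\varepsilon^3)} n^{O(1)}$ for fixed $i$. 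The main obstacle I anticipate is precisely the careful re-examination of Claim~\ref{clm:bip} to certify it does not secretly use $\iocp \leqslant 1$---a line-by-line check should show every contradiction in its proof arises from a closed walk strictly shorter than $C_{\text{og}}$, contradicting the choice of $C_{\text{og}}$ rather than any packing assumption---together with the bookkeeping that $C_{\text{og}}$ remains disjoint and non-adjacent to each surviving $C_j$ in $H$ despite $S^\gamma \cap C_{\text{og}}$ possibly being nonempty.
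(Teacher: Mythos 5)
Your proposal is correct and follows essentially the same approach as the paper: induction on $i$, removing an odd-cycle transversal $T'$ whose intersection with $I$ is small, and recursing on $H-T'$ after observing that its induced odd cycle packing number has dropped to $i-1$ (the paper states this drop without proof; your elaboration---that Claim~\ref{clm:bip} relies only on the minimality of $C_{\text{og}}$ and not on $\iocp \leqslant 1$, and that $C_{\text{og}}$ can be added to any packing witness of $H-T'$ to obtain a packing in $H$---is the content that justifies it). The only mechanical difference is the error accounting: the paper keeps a single parameter $\varepsilon'$ across all $i$ levels and sets $\varepsilon := i\varepsilon'$, whereas you halve the error at each recursive call; both telescope to $2^{\tilde{O}(1/\varepsilon^3)}n^{O(1)}$ for constant $i$, though the paper's additive budget gives a slightly milder $i$-dependence inside the $\tilde{O}$.
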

\begin{proof}
  Let $I$ be a maximum independent set.
  We show by induction on $i$ that we can find in time $2^{\tilde{O}({1/\varepsilon'}^3)}n^{O(1)}$ a stable set of size $(1-i\varepsilon')|I|$.
  The base case is Theorem~\ref{thm:eptas}.
  We assume that there is such an algorithm when the induced odd cycle packing number is $i-1$.
  We follow Algorithm~\ref{alg:eptas} with a graph $H$ such that $\iocp(H)=i$.
  On line 15 and 23, the graph is not necessarily bipartite anymore (on line 24, the resulting graph is still bipartite in this case).
  Although, the induced odd cycle packing number is decreased to $i-1$.
  So, by the induction hypothesis we get a stable set within a factor $(1-(i-1)\varepsilon')$ of the optimum.
  To get there, we removed a subset of vertices of size at most $\varepsilon'|I|$.
  Therefore, the solution $S$ that we obtain satisfies $|S| \geqslant (1-i\varepsilon')|I|$.

  We obtain the theorem by setting $\varepsilon := i \varepsilon'$ since $i$ is absorbed in the $\tilde{O}$ in the running time.
\end{proof}

\section{Balls and higher dimensions}\label{sec:higher}

We start by showing the main result of this section: for every unit ball graph $G$, $\iocp(\overline G) \leqslant 1$.
A \emph{closed polygonal chain} $C$ in $\mathbb R^d$ is defined by a set of points (or vertices) $x_1, x_2, \ldots, x_p \in \mathbb R^d$ as the straight-edge segments $x_1x_2$, $x_2x_3$, \dots, $x_{p-1}x_p$, $x_px_1$.
We call \emph{direction} of a non-zero vector its equivalence class by the relation $\vec{u} \sim \vec{v} \Leftrightarrow \exists \lambda \in \mathbb R^+, \vec{u} = \lambda \vec{v}$.
We denote the direction of $\vec{u}$ by $\dir{\vec{u}}$.
We define the set of directions $$\needle{C}:=\underset{1 \leqslant i \leqslant p}{\bigcup} \fork{x_{i-1}}{x_i}{x_{i+1}} \cup \forkb{x_{i-1}}{x_i}{x_{i+1}},$$ 
where the indices are taken modulo $p$, $\fork{x_{i-1}}{x_i}{x_{i+1}} := \{\dir{\overrightarrow{x_ix}} ~|~ x \in x_{i-1}x_{i+1} \},$ and $\forkb{x_{i-1}}{x_i}{x_{i+1}} := \{\dir{\overrightarrow{xx_i}} ~|~ x \in x_{i-1}x_{i+1} \}$. 

\begin{lemma}\label{lem:same-direction}
  Let $C_1$ and $C_2$ be two closed polygonal chains of $\mathbb R^3$ on an odd number of vertices each.
  Then, $\needle{C_1} \cap \needle{C_2}$ is non-empty.
\end{lemma}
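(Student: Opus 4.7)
The plan is to associate to each chain $C_i$ a continuous closed curve $\gamma_i : S^1 \to S^2$ whose image is exactly $\needle{C_i}$ and which is antipodally symmetric, meaning $\gamma_i(t + 1/2) = -\gamma_i(t)$. Once built, these curves descend to loops in $\mathbb{RP}^2$ representing the non-trivial element of $\pi_1(\mathbb{RP}^2) \simeq \mathbb{Z}/2$, and a standard mod-$2$ intersection argument forces them to meet.

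To build $\gamma$ for $C = x_1 x_2 \ldots x_p$ with $p$ odd, abbreviate $\vec{d}_i := \dir{\overrightarrow{x_i x_{i+1}}}$. Each arc $\fork{x_{i-1}}{x_i}{x_{i+1}}$ is a continuous path in $S^2$ from $-\vec{d}_{i-1}$ to $\vec{d}_i$; its antipode $\forkb{x_{i-1}}{x_i}{x_{i+1}}$ runs from $\vec{d}_{i-1}$ to $-\vec{d}_i$. Starting at $\vec{d}_1$, I traverse $\forkb{x_1}{x_2}{x_3}$ from $\vec{d}_1$ to $-\vec{d}_2$, then $\fork{x_2}{x_3}{x_4}$ from $-\vec{d}_2$ to $\vec{d}_3$, and so on, strictly alternating $\forkb{}$ and $\fork{}$ at consecutive vertices. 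Continuity at each junction is immediate because consecutive arcs share the endpoint $\pm\vec{d}_i$. The sign in front of $\vec{d}_i$ flips at each transition, so after one full lap around $C$ the direction becomes $(-1)^p \vec{d}_1 = -\vec{d}_1$ precisely because $p$ is odd. A second lap with the opposite alternation pattern traces exactly the antipodes of the first-lap arcs and returns the direction to $\vec{d}_1$, closing the curve. Each of the $2p$ arcs in the definition of $\needle{C}$ is traversed exactly once, so the image of $\gamma$ equals $\needle{C}$, and the two-lap description gives $\gamma(t+1/2) = -\gamma(t)$.

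Applying this construction to both chains gives $\gamma_1, \gamma_2$. Let $\pi : S^2 \to \mathbb{RP}^2$ be the antipodal quotient. The antipodal symmetry of each $\gamma_i$ makes $\pi \circ \gamma_i$ factor as a loop $\bar\gamma_i$ of half period whose lift to $S^2$ joins a point to its antipode; hence $\bar\gamma_i$ represents the generator of $\pi_1(\mathbb{RP}^2) \simeq \mathbb{Z}/2$. Since the self-intersection of this generator in $H_1(\mathbb{RP}^2;\mathbb{Z}/2)$ is $1$, the loops $\bar\gamma_1$ and $\bar\gamma_2$ must cross (equivalently, removing a non-contractible loop from $\mathbb{RP}^2$ leaves an open disk). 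Pulling a crossing back to $S^2$, and using that both images are antipodally closed, yields a common direction in $\needle{C_1} \cap \needle{C_2}$.

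The main delicacy is making the parity bookkeeping rigorous — verifying that it is precisely the oddness of $p$ that forces the curve to traverse a genuine antipodal lap rather than closing up after one lap (which is what happens for even $p$, as a direct sign count shows) — and handling degenerate configurations, such as three collinear consecutive vertices or an entire chain lying on a great circle, where an individual fork may degenerate to a single direction or a half great circle. These cases can be resolved either by a generic-position perturbation of the chains (the conclusion being closed under limits of directions) or by a direct continuity check, since the forks remain continuous closed connected subsets of $S^2$ and the topological intersection argument in $\mathbb{RP}^2$ still applies.
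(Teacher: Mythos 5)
Your proposal follows the same route as the paper's proof: build a continuous closed parameterization of $\needle{C_i}$ on the 2-sphere that is antipodally symmetric (the oddness of $p$ being exactly what forces the curve to pass through the antipodal direction after one lap and to close up after two), and then conclude by the classical fact that two closed antipodal curves on $S^2$ must meet. The paper sketches the second step in one sentence, whereas you flesh it out via the induced non-contractible loops in $\mathbb{RP}^2$ and the mod-$2$ intersection pairing, and you also flag the degenerate configurations (collinear triples, chains on a great circle); these are welcome refinements rather than a different argument.
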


\begin{proof}
  We want to establish the existence of a direction which is common to $\needle{C_1}$ and $\needle{C_2}$.
  For that, we yield a continuous map from $\mathbb R^+$ to $\needle{C_1}$ that we naturally interpret as a union of curves on the 2-sphere (unit sphere in $\mathbb{R}^3$).
  Indeed the set of directions in $\mathbb R^3$ is isomorphic to the set of points on the 2-sphere.
  As we do not need a specific parameterization of the curve, we just describe how we continuously move a vector $\overrightarrow{ab}$ whose direction runs through the entire set $\needle{C_1}$.
  Let $x_1, x_2, \ldots, x_p$ be the vertices of $C_1$.

  Let us first prove that $\needle{C}$ is a closed curve when $C$ is an odd cycle. To do so we prove that we can continuously modify the current vector $\dir{\overrightarrow{x_1x_2}}$ into any other vector in $\needle{C}$.
  We start with $a$ in $x_1$ and $b$ in $x_2$.
  We continuously move $a$ from $x_1$ to $x_3$ (on the straight-edge segment $x_1x_3$) while $b$ stays fixed at $x_2$ 
 (we are in moving in $\fork{x_1}{x_2}{x_3}$).
  For the next step, $a$ is fixed at $x_3$ and $b$ continuously moves from $x_2$ to $x_4$.
  (we are in moving in $\forkb{x_{2}}{x_3}{x_{4}}$).
  And in general, we move the point with index $i-1$ from $x_{i-1}$ to $x_{i+1}$ while the other point stays fixed at $x_i$ (where the indices are modulo $p$).
  Since $p$ is odd, we reach the situation where $b$ is set to $x_1$ and $a$ is set to $x_2$ when we have completed once the walk on the closed polygonal chain.  We can repeat a walk on the chain once again, so that $a$ is back to $x_1$ and $b$ is back to $x_2$, and we stop.
  One may observe that this process spans $\needle{C_1}$.
  
  This defines a closed curve $\mathcal C_1$ on the 2-sphere since we are finally back to $\dir{\overrightarrow{x_1x_2}}$ from where we started.
  Furthermore, $\mathcal C_1$ is \emph{antipodal}, i.e., closed by taking antipodal points.
  Indeed, for each direction attained in a $\fork{x_{i-1}}{x_i}{x_{i+1}}$, we reach the opposite direction in $\forkb{x_{i-1}}{x_i}{x_{i+1}}$. 
  Similarly $\needle{C_2}$ draws a closed antipodal curve on the 2-sphere $\mathcal C_2$.
  The curves $\mathcal C_1$ and $\mathcal C_2$ intersect since they are closed and antipodal.
  An intersection point corresponds to a direction shared by $\needle{C_1}$ and $\needle{C_2}$.
\end{proof}

We will apply this lemma on the closed polygonal chains $C_1$ and $C_2$ formed by the centers of unit balls realizing two odd cycle complements.
The contradiction will come from the fact that not all the pairs of centers $x \in C_1$ and $y \in C_2$ can be at distance at most 2.

\forbidden*
\begin{proof}
  Let $x_1, x_2, \ldots, x_p \in \mathbb R^3$ (resp.~$y_1, y_2, \ldots, y_q \in \mathbb R^3$) be the centers of unit balls representing the complement of an odd cycle of length $p$ (resp. $q$), such that $x_i$ and $x_{i+1}$ (resp. $y_i$ and $y_{i+1}$) encode the non-adjacent pairs.
  Let $C_1$ (resp. $C_2$) be the closed polygonal chain obtained from the centers $x_1, x_2, \ldots, x_p$ (resp. $y_1, y_2, \ldots, y_q$).
  By Lemma~\ref{lem:same-direction}, there are two collinear vectors $\overrightarrow{x_ix}$ and $\overrightarrow{y_jy}$ with $x$ on the straight-line segment $x_{i-1}x_{i+1}$ and $y$ on the straight-line segment $y_{j-1}y_{j+1}$.

  \begin{figure}[!ht]
    \centering
    \begin{tikzpicture}[scale=1.5,dot/.style={fill,circle,inner sep=-0.03cm}]
      \coordinate (a) at (-0.095,0,0) ;
      \coordinate (b) at (-0.2,1.6,0.5) ;
      \coordinate (c) at (0,1.6,-0.5) ;
      \coordinate (x) at (-0.05,1.6,-0.25) ;

      \coordinate (d) at (-0.28,0,0) ;
      \coordinate (e) at (0.8,2,0.8) ;
      \coordinate (f) at (-0.4,2,-0.4) ;
      \coordinate (y) at (-0.2,2,-0.2) ;

      \foreach \i in {a,b,c,d,e,f,x,y}{
        \node[dot] at (\i) {} ;
      }

      \draw[dashed,thin,red] (b) -- (a) -- (c) ;
      \draw[dashed,thin,blue] (e) -- (d) -- (f) ;
      \draw[dotted] (b) -- (c) ;
      \draw[dotted] (e) -- (f) ;
      \draw[thick,red] (a) -- (x) ;
      \draw[thick,blue] (d) -- (y) ;

      \node at (-0.3,-0.2,0) {$x_i$} ;
      \node at (-0.17,2.13,-0.2) {$x$};
      \node at (0,1.75,-0.2) {$y$} ;
      \node at (-0.08,-0.2,0) {$y_j$} ;

      \begin{scope}[xshift=3cm]
      \coordinate (x) at (-0.095,0,0) ;
      \coordinate (b) at (-0.2,0.3,0.5) ;
      \coordinate (c) at (0,-0.3,-0.5) ;
      \coordinate (a) at (-0.05,1.6,-0.25) ;

      \coordinate (d) at (-0.28,0,0) ;
      \coordinate (e) at (0.8,2,0.8) ;
      \coordinate (f) at (-0.4,2,-0.4) ;
      \coordinate (y) at (-0.2,2,-0.2) ;

      \foreach \i in {a,b,c,d,e,f,x,y}{
        \node[dot] at (\i) {} ;
      }

      \draw[dashed,thin,red] (b) -- (a) -- (c) ;
      \draw[dashed,thin,blue] (e) -- (d) -- (f) ;
      \draw[dotted] (b) -- (c) ;
      \draw[dotted] (e) -- (f) ;
      \draw[thick,red] (a) -- (x) ;
      \draw[thick,blue] (d) -- (y) ;

      \node at (-0.3,-0.2,0) {$x_i$} ;
      \node at (-0.17,2.13,-0.2) {$x$};
      \node at (0,1.75,-0.2) {$y_j$} ;
      \node at (-0.08,-0.2,0) {$y$} ;
      \end{scope}
    \end{tikzpicture}
    \caption{The two cases for the collinear vectors $\overrightarrow{x_ix}$ and $\overrightarrow{y_jy}$.}
    \label{fig:two-cases-collinear}
  \end{figure}
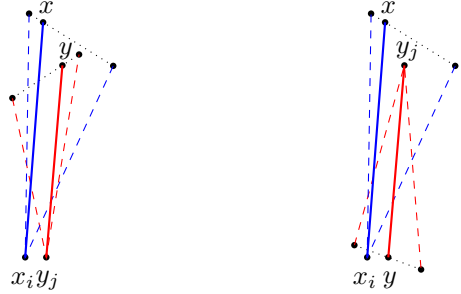
  Let us suppose that $\overrightarrow{x_ix}$ and $\overrightarrow{y_jy}$ have the same direction (Figure~\ref{fig:two-cases-collinear}, left).
  In the plane\footnote{or \emph{a} plane if it is not unique} containing $x_i$, $x$, $y_j$, and $y$, those four points are in convex position and the convex hull is cyclically ordered $x_i,x,y,y_j$.
  We obtain a contradiction by showing that the sum of the diagonals $d(x_i,y)+d(y_j,x)$ is strictly smaller than the sum of two opposite sides $d(x_i,x)+d(y_j,y)$.
  
  Considering edges and non-edges, for every $i$, the only vertices at distance at least $2$ from $y_i$ in $\cup \{ x_j , y_j\}$ are $y_{i-1}$ and $y_{i+1}$. Then, we have $d(x_i,y_{j-1}) \leqslant 2 < d(y_j,y_{j-1})$ and $d(x_i,y_{j+1}) \leqslant 2 < d(y_j,y_{j+1})$.
  The points strictly closer to $x_i$ than to $y_j$ form an open half-space.
  In particular, they form a convex set and all the points in the segment $y_{j-1}y_{j+1}$ are therefore closer to $x_i$ than to $y_j$.
  Hence, $d(x_i,y) < d(y_j,y)$.
  Symmetrically, $d(y_j,x) < d(x_i,x)$.
  So $d(x_i,y)+d(y_j,x) < d(y_j,y)+d(x_i,x)$, a contradiction.
  
  Let us now assume that $\overrightarrow{x_ix}$ and $\overrightarrow{y_jy}$ have opposite direction (Figure~\ref{fig:two-cases-collinear}, right).
  In that case, the four coplanar points $x_i,x,y_j,y$ are in convex position in their plane and the convex hull is cyclically ordered $x_i,x,y_j,y$.
  We will attain the similar contradiction that $d(x,y)+d(x_i,y_j) < d(x_i,x)+d(y_j,y)$.
  As previously, $d(y_{j-1},x_{i-1}) \leqslant 2 < d(x_i,x_{i-1})$ and $d(y_{j-1},x_{i+1}) \leqslant 2 < d(x_i,x_{i+1})$.
  Hence, by convexity, we have $d(y_{j-1},x) < d(x_i,x)$.
  Similarly, $d(y_{j+1},x) < d(x_i,x)$.
  We obtained that point $x$ is closer to $y_{j-1}$ and $y_{j+1}$ than to $x_i$.
  Therefore, applying again the convexity argument, we get that $d(x,y) < d(x,x_i)$.
  Besides, $d(x_i,y_j) \leqslant 2 < d(y,y_j)$.
  So we arrive at the contradiction $d(x,y)+d(x_i,y_j) < d(x,x_i)+d(y,y_j)$.
\end{proof}

It is folklore that unit ball graphs have VC-dimension $4$.
At the price of a multiplicative factor $n$ in the running time, one can guess a vertex $v$ in a maximum clique of a unit ball graph $G$, and look for a clique in its neighborhood $H := G[N(v)]$.
It can be easily shown that the neighborhood of this vertex (in fact, of any vertex) can be partitioned into $25$ cliques, following the proof that the kissing number for unit spheres is $12$.
Thus $\alpha(\overline H) \geqslant |V(H)|/25$.
Therefore, from Theorem~\ref{thm:eptas} and Theorem~\ref{thm:no-two-odd-cycles-ubg}, we immediately obtain the following.

\ball*

From Theorem~\ref{thm:no-two-odd-cycles-ubg}, we also get an exact subexponential algorithm running in time $2^{\tilde{O}(n^{2/3})}$ for \cli on unit ball graphs.
Indeed, such an algorithm was obtained for the class of graphs $G$ such that $\iocp(\overline G) \leqslant 1$ \cite{bonnetetal18}.

In sharp contrast, we will prove that with an extra dimension or with different radii (even arbitrarily close to each other), a PTAS is highly unlikely.
In both cases, we show that there is a constant ratio which is not attainable even in subexponential time, unless the ETH fails.
A \emph{2-subdivision} of a simple graph is the graph obtained by subdividing each edge exactly twice.
A \emph{co-2-subdivision} is the complement graph of the 2-subdivision of a graph.
The class of the 2-subdivisions (resp. co-2-subdivisions) is the set of all graphs that are the 2-subdivisions (resp.~co-2-subdivisions) of a simple graph. 

The following result was shown by Bonnet \emph{et al}.
\begin{theorem}[\cite{bonnetetal18}]
  There is a constant $\alpha > 1$ such that \cli on co-2-subdivisions is not $\alpha$-approximable even in time $2^{n^{0.99}}$, unless the ETH fails.
  Moreover, \cli is NP-hard on co-2-subdivisions.
\end{theorem}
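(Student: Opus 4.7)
The plan is to establish the hardness first for \mis on 2-subdivisions, then transport it to \cli on co-2-subdivisions by complementation, exploiting $\omega(\overline{H}) = \alpha(H)$. As the base hardness, I would invoke the APX-hardness of \mis on cubic graphs (Alimonti--Kann), which furnishes a constant $\alpha_0 > 1$ below which approximation is NP-hard, together with a linear-size reduction from 3-SAT combined with the Sparsification Lemma to rule out $2^{o(n)}$-time algorithms under ETH.

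The key structural ingredient is the identity $\alpha(G'') = \alpha(G) + m$, where $G''$ denotes the 2-subdivision of $G$ and $m = |E(G)|$. The inequality $\geq$ is immediate: extend any independent set $S$ of $G$ by, for each edge $uv \in E(G)$, one of $\{a_{uv}, b_{uv}\}$ that avoids $S$; such a vertex always exists since $S$ contains at most one of $u, v$. For $\leq$, given an independent set $I$ of $G''$, let $S := I \cap V(G)$ and $k := |E(G[S])|$; each subdivided edge with both endpoints in $S$ blocks both of its subdivision vertices, so $|I| \leq |S| + (m - k)$, and the elementary bound $|S| - k \leq \alpha(G[S]) \leq \alpha(G)$ (a one-line induction on $|S|$ that removes a vertex touching some edge at each step) yields $|I| \leq \alpha(G) + m$. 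Both directions are constructive: from $I$ one can extract in polynomial time an independent set of $G$ of size at least $|I| - m$.

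I would then transfer hardness. From a $(1-\varepsilon)$-approximation of $\alpha(G'')$ producing $I$ with $|I| \geq (1-\varepsilon)(\alpha(G) + m)$, the extraction above yields an independent set of $G$ of size at least $(1-\varepsilon)\alpha(G) - \varepsilon m$. On cubic $G$ we have $m = 3n/2$ and $\alpha(G) \geq n/4$, whence $m \leq 6\alpha(G)$ and $|S'| \geq (1 - 7\varepsilon)\alpha(G)$. Fixing $\varepsilon > 0$ small enough that $1/(1 - 7\varepsilon) < \alpha_0$ gives an $\alpha_0$-approximation of \mis on cubic graphs, a contradiction with the base hardness. Since $|V(G'')| = 4n$, a $2^{n^{0.99}}$-time approximation on the co-2-subdivision translates, after complementation, to a $2^{o(n)}$-time constant-factor approximation on cubic graphs, contradicting ETH. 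Complementation transports everything to \cli on co-2-subdivisions, and the NP-hardness part follows along the same lines from the NP-hardness of \mis on cubic graphs.

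The main obstacle is the approximation transfer: the additive offset $m$ in the identity $\alpha(G'') = \alpha(G) + m$ is only harmless when $m = O(\alpha(G))$, which forces the base class to be of bounded degree; otherwise the subdivision vertices would dominate $\alpha(G'')$ and an approximation of it would say almost nothing about $\alpha(G)$. The other point to handle with care is the ETH bookkeeping, but this is routine once we fix a linear-size hard base instance.
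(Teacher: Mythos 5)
This theorem is cited from \cite{bonnetetal18} without proof in the present paper, so there is no in-paper proof to compare against; I can only judge your argument on its own terms and against what the cited source does. Your core machinery is exactly the standard one used in that paper: the identity $\alpha(G'') = \alpha(G) + m$ for the 2-subdivision $G''$ of a graph $G$ with $m$ edges, the constructive extraction of an independent set of $G$ of size $\geq |I| - m$ from any independent set $I$ of $G''$, the restriction to bounded-degree base instances so that the additive offset $m$ is $O(\alpha(G))$, and complementation to pass from \mis on 2-subdivisions to \cli on co-2-subdivisions. All of these steps are correct; in particular your upper-bound argument via $|I| \leq |S| + (m - k)$ and $|S| - k \leq \alpha(G)$ is right, and the arithmetic $|S'| \geq (1 - 7\varepsilon)\alpha(G)$ on cubic graphs using $m \leq 6\alpha(G)$ goes through.

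The one genuine soft spot is the base hardness. The Sparsification Lemma produces linear-size 3-SAT instances that are hard to decide \emph{exactly} in $2^{o(n)}$ time; it does not create a constant gap, and Alimonti--Kann's APX-hardness is a chain of polynomial-time (not size-controlled) reductions, so neither ingredient on its own yields a gap-hard instance whose size is quasi-linear in the original 3-SAT variables. To rule out a constant-factor approximation in time $2^{n^{0.99}}$ under ETH one must route through an almost-linear PCP (e.g.\ Dinur's), which turns 3-SAT into Gap-3-SAT with only polylogarithmic size blow-up; from there the reduction to \mis on bounded-degree graphs, the 2-subdivision, and complementation each incur only a constant-factor blow-up, and the $2^{n^{0.99}}$ bound follows. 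You flag the bookkeeping as ``routine once we fix a linear-size hard base instance,'' which is the right instinct, but you should replace the reference to the Sparsification Lemma by the PCP step — the Sparsification Lemma alone cannot supply the gap.
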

Therefore, we just need to show that all co-2-subdivisions can be realized by our geometric objects. 
This appears like a simple and powerful method to rule out a PTAS (QPTAS, and even SUBEXPAS) for a geometric clique problem.

The co-2-subdivision of a graph with $n$ vertices and $m$ edges can be thought of as follows.
It is made of a clique on $n$ vertices, representing the initial vertices, and a clique on $2m$ vertices minus a perfect matching, representing endpoints of the initial edges.
Each anti-matched pair of vertices corresponds to an edge in the initial graph.
Each vertex representing one endpoint of an initial edge is adjacent to all the vertices representing the initial vertices but this endpoint. 

\begin{theorem}\label{thm:4udg}
  The class of 4-dimensional unit disks contains all the co-2-subdivisions.
\end{theorem}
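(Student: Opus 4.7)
Given a graph $G = ([n], E(G))$ with $m$ edges, the plan is to produce an explicit set of $n + 2m$ points in $\mathbb{R}^4$ realizing the co-2-subdivision of $G$ as the intersection graph of unit balls — equivalently, with pairwise Euclidean distance at most $2$ precisely when the corresponding pair of vertices is adjacent. The construction will exploit the splitting $\mathbb{R}^4 = \mathbb{R}^3 \times \mathbb{R}$, with the first factor encoding original-vertex labels and the second (``height'') controlling the matching non-edges.

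The initial vertices $u_1, \ldots, u_n$ are placed as a tight cluster at height $0$: say $u_i = (\epsilon v_i, 0)$ with $\epsilon > 0$ a small parameter and $v_1, \ldots, v_n \in \mathbb{R}^3$ in generic convex position (e.g.\ on the moment curve). Their pairwise distances are bounded by $2\epsilon \ll 2$, so they form a clique. For each edge $e = \{i,j\}$, the endpoints are placed as $e_i = (P_{e,i}, h_e)$ and $e_j = (P_{e,j}, h_e)$ at an edge-specific height $h_e \in (0,2)$, with horizontal positions $P_{e,i}, P_{e,j} \in \mathbb{R}^3$ chosen in the \emph{slivers} $\Sigma_i^{T_e}$ of points at horizontal distance at most $T_e := \sqrt{4 - h_e^2}$ from every $u_k$ with $k \neq i$, but strictly more than $T_e$ from $u_i$ (and analogously for $j$). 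Generic convex position of the $v_k$'s ensures each sliver is non-empty; it is essentially a crescent on the sphere of radius $T_e$ around the cluster, on the side opposite to the outward-normal direction at the convex-hull vertex $v_i$. This automatically guarantees the adjacency pattern $|u_k - e_i| \leq 2$ iff $k \neq i$.

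The core difficulty is choosing the horizontal representatives $P_{e,i}$ so that two further constraints hold simultaneously: (a) the matching non-edge $|P_{e,i} - P_{e,j}| > 2$ for each edge $e = \{i,j\}$, which (since the two endpoints share a height) must be realized entirely in $\mathbb{R}^3$ by placing $P_{e,i}$ and $P_{e,j}$ near opposite extreme points of the common convex body $\bigcap_{k \neq i,j} B(u_k, T_e)$, whose diameter is roughly $2T_e \approx 4$; and (b) the cross-edge adjacency $|P_{e,x} - P_{f,y}|^2 + (h_e - h_f)^2 \leq 4$ for any two endpoints of distinct edges $e \neq f$, for which the vertical slack $(h_e - h_f)^2 > 0$ relaxes the horizontal constraint. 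The main obstacle is reconciling (a) and (b) uniformly across all edge pairs for an arbitrary graph $G$; I would handle this by calibrating $\epsilon$ much smaller than $\min_{e \neq f}|h_e - h_f|$, which is in turn much smaller than the sliver thickness and the outward-normal gap granted by convex position of the $v_i$'s. Verification reduces to a handful of explicit inequalities on these parameters, a routine but delicate computation that completes the proof.
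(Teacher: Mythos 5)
Your construction uses the split $\mathbb{R}^3\times\mathbb{R}$ with one coordinate reserved for a per-edge height, whereas the paper uses an orthogonal $\mathbb{R}^2\times\mathbb{R}^2$ split; these really are different routes. But the proposal does not close what you yourself flag as the main obstacle, and a symptom of this is the incorrect claim that the vertical term $(h_e-h_f)^2$ ``relaxes'' constraint (b). It does the opposite: (b) demands $|P_{e,x}-P_{f,y}|\le\sqrt{4-(h_e-h_f)^2}<2$, so distinct heights give you \emph{less} horizontal budget for cross-pairs, not more, while (a) requires the matched pair $P_{e,i},P_{e,j}$ to be at horizontal distance strictly more than $2$. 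Since all $2m$ horizontal representatives lie near spheres of radii $T_e\in(1,2)$ around the cluster, you are asking for a rigid spherical configuration: $m$ designated pairs at chord distance $>2$, all remaining $\binom{2m}{2}-m$ pairs at chord distance $<2$. Concretely, if $e=\{1,2\}$ and $f=\{1,3\}$ then $P_{e,1}$ and $P_{f,1}$ sit in (essentially) the same crescent $\Sigma_1$, and $P_{e,2}$ must be at distance $>2$ from $P_{e,1}$, while $P_{f,3}$ must simultaneously be at distance $<2$ from $P_{e,1}$ and $>2$ from $P_{f,1}$ \emph{and} at distance $<2$ from $P_{e,2}$; a short computation using $T_e^2+T_f^2+(h_e-h_f)^2=8-2h_eh_f$ shows that, for any ordering of $h_e,h_f$, at least one of the two symmetric conditions $C_{ef}<M_e$, $C_{ef}<M_f$ (in the natural cosine notation) fails in the limit where the slivers are treated as points. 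So either the construction fails outright, or you must exploit the nonzero angular extent of the slivers in a delicate way; neither option is a ``routine calibration'' of $\epsilon$, the height gaps, and sliver thickness -- the conflicting quantities here are of order $1$. (The requested ordering $\epsilon\ll|h_e-h_f|\ll$ ``sliver thickness'' is also suspect, since the radial thickness of a sliver is itself $O(\epsilon)$.) This is a genuine gap, not a verification detail.

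The paper sidesteps this tension by making the threshold case the starting point rather than something to be beaten by an $O(1)$ margin. The projections of the $2m$ edge-points are placed on a circle of \emph{diameter exactly} $2$ inside a $2$-plane $\mathcal P$, with matched pairs antipodal (distance exactly $2$) and every non-matched pair strictly below $2$; the $n$ original-vertex points sit on an arc of radius $\sqrt3-\varepsilon$ in the orthogonal $2$-plane $\mathcal P^\perp$; and each edge-point is then nudged by an infinitesimal vector \emph{inside $\mathcal P^\perp$}, chosen to point away from that edge-point's one forbidden vertex. Because the nudges are orthogonal to $\mathcal P$ and the two ends of an edge receive different nudge directions, each matched pair is pushed just past distance $2$ while every cross-pair stays strictly below, and the same nudge simultaneously breaks the single forbidden vertex--edge adjacency. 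The key is that matched pairs start at distance \emph{exactly} $2$, so an infinitesimal orthogonal move suffices; in your layout they start strictly beyond $2$ already in $\mathbb{R}^3$, which is what produces the non-infinitesimal conflict. If you want to keep a cluster-plus-crescents skeleton, you would need to replace the per-edge height by the paper's two-dimensional orthogonal perturbation and tie it to the forbidden vertex rather than to the edge.
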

\begin{proof}
  Given any simple graph $G=(V,E)$ with $n$ vertices and $m$ edges, we want to build a set $S$ of $n+2m$ points in $\mathbb R^4$ where each vertex $v$ is represented by a point $p(v)$ and each edge $e$, by two points $p^+(e)$ and $p^-(e)$.
  A pair of points in $S$ should be at distance at most 2, except if it is the two points of the same edge $e$, or if it is a point $p(v)$ with either a $p^+(vw)$ or a $p^-(wv)$; those pairs should be at distance strictly more than 2.
  We call $x,y,z,t$ the coordinates of $\mathbb R^4$.
  Let $\mathcal P$ be the plane defined by the intersection of the hyperplanes of equation $z=0$ and $t=0$.
  The projection $\pi$ onto $\mathcal P$ of the points $p^+(e)$ (resp. $p^-(e)$) fall regularly on the "top" part (resp. "bottom" part) of a circle $\mathcal C$ of $\mathcal P$ centered at the origin $O=(0,0,0,0)$ and of diameter $2$, such that for each edge $e$, $\pi(p^+(e))$ and $\pi(p^-(e))$ are antipodal on $\mathcal C$.
  We just defined the points $\pi(p^+(e))$ and $\pi(p^-(e))$.
  The actual points $p^+(e)$ and $p^-(e)$ will be fixed later by moving them very slightly away from their projection in a two-dimensional plane orthogonal to $\mathcal P$. 
  Let $\eta$ be the maximum distance between a pair $(\pi(p^+(e)),\pi(p^-(e')))$ with $e \neq e'$.
  By construction $\eta<2$.
  
  Let $\mathcal P^\bot$ be the (2-dimensional) plane containing the center $O$ of $\mathcal C$ and orthogonal to $\mathcal P$; in other words, the intersection of the hyperplanes of equation $x=0$ and $y=0$.
  We observe that all the points of $\mathcal P^\bot$ are equidistant to all the points $\pi(p^+(e))$ and $\pi(p^-(e))$.
  We place all the points $p(v)$ in $\mathcal P^\bot$ regularly spaced on a arc of a circle lying on $\mathcal P^\bot$ centered at $O$ and of radius $\sqrt{3}-\varepsilon$.
  One can notice that for any $(v, e, s) \in V \times E \times \{+,-\}$, $d(p(v),\pi(p^s(e))) = \sqrt{4-2 \sqrt 3 \varepsilon + \varepsilon^2}$.
  We will choose $\varepsilon \ll 2 - \eta \ll 1$ so that this shared distance is just below 2 and the points $\{p^+(e),p^-(e)\}_{e \in E}$ realize the same adjacencies than their projection by $\pi$.

  For every $e=uv \in E$, we place $p^+(e)$ such that $\overrightarrow{\pi(p^+(e))p^+(e)}=-\frac{(\varepsilon+\varepsilon')}{\lVert Op(u) \rVert}\overrightarrow{Op(u)}$ and $p^-(e)$ such that $\overrightarrow{\pi(p^-(e))p^-(e)}=-\frac{(\varepsilon+\varepsilon')}{\lVert Op(v) \rVert}\overrightarrow{Op(v)}$.
  In words, we push very slightly $p^+(e)$ (resp. $p^-(e)$) away from $\pi(p^+(e))$ (resp. $\pi(p^-(e))$) in the opposite direction of $\overrightarrow{Op(u)}$ (resp. $\overrightarrow{Op(v)}$).
  This way, $p^+(e)$ is at distance more than 2 from $p(u)$.
  Indeed, $d(p^+(e),p(u))=\lVert \overrightarrow{p^+(e)\pi(p^+(e))} + \overrightarrow{\pi(p^+(e))O} + \overrightarrow{Op(u)} \rVert = \lVert \overrightarrow{\pi(p^+(e))O} + \overrightarrow{Op(u)} + \overrightarrow{p^+(e)\pi(p^+(e))} \rVert = \lVert \overrightarrow{\pi(p^+(e))O} + (\sqrt 3 -\varepsilon+\varepsilon+\varepsilon')\frac{\overrightarrow{Op(u)}}{\lVert \overrightarrow{Op(u)}\rVert} \rVert = \sqrt{1+(\sqrt 3+\varepsilon')^2} > 2$.
  Similarly, $d(p^-(e),p(v))>2$.
  We choose $\varepsilon'$ infinitesimal (in particular, $\varepsilon' \ll \varepsilon$) so that, still, $d(p^+(e),p(w))<2$ for any $w \neq u$ and $d(p^-(e),p(w))<2$ for any $w \neq v$.
  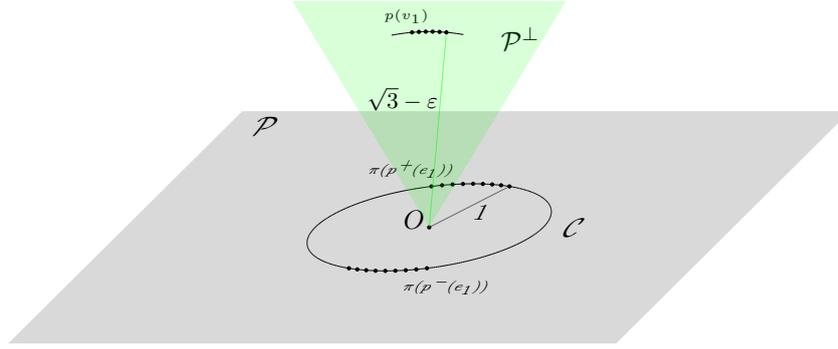
\begin{figure}
    \centering
    \begin{tikzpicture}[
        dot/.style={fill,circle,inner sep=-0.02cm},
      ]
      \def\xs{0.5}
      \coordinate (a) at (-4,0,-4) {} ;
      \coordinate (b) at (4,0,-4) {} ;
      \coordinate (c) at (4,0,4) {} ;
      \coordinate (d) at (-4,0,4) {} ;
      \fill[opacity=0.15] (a) -- (b) -- (c) -- (d) -- cycle ;
      \coordinate (O) at (0,0,0) {} ;
      \node[dot] at (O) {} ;
      \node at (-0.2,0.1,0) {$O$} ;
      \node[xslant=\xs]  at (-3.5,0,-3.5) {$\mathcal P$} ;
      \node[xslant=\xs]  at (1.9,0,0) {$\mathcal C$} ;
      
      \def \r{1.5}
      \foreach \i in {70,75,80,85,90,95,100,105,110}{
        \pgfmathsetmacro{\CosValue}{\r * cos(\i)}
        \pgfmathsetmacro{\SinValue}{\r * sin(\i)}
        \node[dot] (pm\i) at (\CosValue,0,\SinValue) {};
        \node[dot] (pp\i) at (\CosValue,0,-\SinValue) {};
      }
      \foreach \i [count=\j from 2] in {1,...,360}{
        \pgfmathsetmacro{\CosValue}{\r * cos(\i)}
        \pgfmathsetmacro{\SinValue}{\r * sin(\i)}
        \pgfmathsetmacro{\CosValueb}{\r * cos(\j)}
        \pgfmathsetmacro{\SinValueb}{\r * sin(\j)}
        \draw[very thin] (\CosValue,0,\SinValue) -- (\CosValueb,0,\SinValueb) {};
      }
      \node[xslant=\xs] (pp1) at (-1,0,-2) {\tiny{$\pi(p^+(e_1))$}} ;
      \node[xslant=\xs] (pm1) at (1,0,2) {\tiny{$\pi(p^-(e_1))$}} ;
      \draw[opacity=0.5] (pp70) -- (O) ;
      \node[xslant=\xs] at (0.5,0,-0.5) {\footnotesize{1}} ;

      \coordinate (e) at (-1.8,3,0) {} ;
      \coordinate (f) at (1.8,3,0) {} ;
      \node at (1.2,2.5,0) {$\mathcal P^{\bot}$} ;
      \node at (-0.3,2.8,0) {\tiny{$p(v_1)$}} ;

      \fill[green,opacity=0.15] (O) -- (e) -- (f) -- cycle ;

      \foreach \i in {85,87,89,91,93,95}{
        \pgfmathsetmacro{\CosValue}{1.73 * \r * cos(\i)}
        \pgfmathsetmacro{\SinValue}{1.73 * \r * sin(\i)}
        \node[dot] (q\i) at (\CosValue,\SinValue,0) {};
      }
      \draw[green,opacity=0.5] (q85) -- (O) ;
      \node at (-0.35,1.7,0) {\footnotesize{$\sqrt{3}-\varepsilon$}} ;
      
      \foreach \i [count=\j from 81] in {80,...,100}{
        \pgfmathsetmacro{\CosValue}{1.73 *\r * cos(\i)}
        \pgfmathsetmacro{\SinValue}{1.73 *\r * sin(\i)}
        \pgfmathsetmacro{\CosValueb}{1.73 *\r * cos(\j)}
        \pgfmathsetmacro{\SinValueb}{1.73 *\r * sin(\j)}
        \draw[very thin] (\CosValue,\SinValue,0) -- (\CosValueb,\SinValueb,0) {};
      }
    \end{tikzpicture}
    \caption{The overall construction for 4-dimensional unit disks. We only represent the centers.}
    \label{fig:4dim-unit-disks}
  \end{figure}
\end{proof}

\begin{corollary}
  There is a constant $\alpha > 1$ such that \cli on 4-dimensional unit disks is not $\alpha$-approximable even in time $2^{n^{0.99}}$, unless the ETH fails.
  Moreover, \cli is NP-hard on 4-dimensional unit disks.
\end{corollary}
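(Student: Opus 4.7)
The plan is to view this corollary as a direct combination of Theorem~\ref{thm:4udg} with the hardness theorem of Bonnet \emph{et al.} for \cli on co-2-subdivisions (stated just above Theorem~\ref{thm:4udg}). Given an instance $H$ of \cli on a co-2-subdivision on $N$ vertices, I would first recover in polynomial time the underlying simple graph $G = (V, E)$ of which $H$ is the co-2-subdivision; the structure (a clique on $|V| = n$ representing original vertices, plus a $K_{2m}$ minus perfect matching on the $2m$ subdivision vertices, with prescribed adjacencies between the two sides) makes this recovery easy and unambiguous. Then I would apply the construction of Theorem~\ref{thm:4udg} to $G$ to obtain a set $S$ of $n + 2m = N$ points in $\mathbb{R}^4$ whose intersection graph under the distance-$\leqslant 2$ relation is isomorphic to $H$.

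Since this reduction is polynomial time and preserves the graph exactly, it preserves the clique number, so an $\alpha$-approximation algorithm for \cli on 4-dimensional unit disks on $N$ points immediately yields an $\alpha$-approximation for \cli on co-2-subdivisions of size $N$, with no blow-up of the instance size. Plugging in the cited subexponential-time lower bound, an $\alpha$-approximation in time $2^{N^{0.99}}$ for 4-dimensional unit disks would give an $\alpha$-approximation in the same time for co-2-subdivisions, contradicting the ETH. Likewise, a polynomial-time algorithm would contradict the NP-hardness half of the cited theorem, giving NP-hardness of \cli on 4-dimensional unit disks.

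There is essentially no hard step here: Theorem~\ref{thm:4udg} already does all the nontrivial geometric work, while the hardness input is handed to us by the cited theorem. The only point requiring a moment's care is the reverse-engineering of $G$ from $H$, but as noted above, the rigid two-block structure of a co-2-subdivision makes this completely mechanical. Hence the corollary follows.
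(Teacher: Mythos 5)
Your proposal is correct and matches the paper's (implicit) reasoning: the corollary is stated without proof precisely because it follows immediately from Theorem~\ref{thm:4udg} together with the cited hardness theorem for co-2-subdivisions, exactly as you describe. Your extra care about reconstructing $G$ from $H$ in order to invoke the explicit point construction is a reasonable precaution (it is needed if one insists on handing the hardness algorithm a geometric representation), but one could equally note that the hardness reduction for co-2-subdivisions already has $G$ in hand and can output the $\mathbb{R}^4$ points directly.
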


The proof of Theorem~\ref{thm:4udg} can be tweaked for balls of different radii.
For a real $\varepsilon > 0$, we say that the radii of a representation (or the representation itself) are \emph{$\varepsilon$-close} if the radii are all contained in the interval $[1,1+\varepsilon]$.
We denote by $\mathcal B(1,1+\varepsilon)$ the ball graphs having an $\varepsilon$-close representation.

\begin{theorem}\label{thm:bg}
 For any $\varepsilon > 0$, the subclass of ball graphs $\mathcal B(1,1+\varepsilon)$ contains all the co-2-subdivisions.
\end{theorem}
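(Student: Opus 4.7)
The strategy mirrors that of Theorem~\ref{thm:4udg}, with the extra degree of freedom offered by the radii in $[1,1+\varepsilon]$ playing the role of the lost fourth coordinate. In the 4D construction, the vertex centers were spread on a two-dimensional arc in $\mathcal{P}^\bot$; in $\mathbb{R}^3$, $\mathcal{P}^\bot$ is only a line, so I would place the vertex centers on a small spherical cap near the $z$-axis at distance $\sqrt{3}$ from the origin, for instance $p(v) = (\mu\cos\theta_v, \mu\sin\theta_v, \sqrt{3-\mu^2})$ for distinct angles $\theta_v$ and a small parameter $\mu > 0$. The projections $\pi(p^+(e))$ and $\pi(p^-(e))$ are placed antipodally on a circle of diameter $2$ in the $xy$-plane, and each $p^s(e)$ is then displaced from its projection in the direction $-\overrightarrow{Op(u)}/\|\overrightarrow{Op(u)}\|$, where $u$ is the vertex that must be non-adjacent to $p^s(e)$, exactly as in Theorem~\ref{thm:4udg}.

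As in the 4D setting, the pairs within the vertex set and the non-antipodal pairs within the endpoint set are adjacent because their centers lie at distance strictly below $2$, while antipodal endpoint pairs and each forbidden pair $(p(u), p^{s}(e))$ are pushed to distance strictly above $2$ by the perturbation. The novel step is that, unlike in $\mathcal{P}^\bot$ in the 4D setting, the direction vectors $\overrightarrow{Op(u)}$ no longer lie in a plane orthogonal to the projections $\pi(p^{\pm}(e))$, so the distance comparisons between $p^+(e)$ and $p(u)$ versus between $p^+(e)$ and $p(w)$ (for $w \neq u$) acquire sign-indefinite cross-terms of order $\mu$. These cross-terms are absorbed by assigning each ball a radius in $[1, 1+\varepsilon]$ instead of the fixed value $1$ used in Theorem~\ref{thm:4udg}.

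The main obstacle is to verify that a single radius assignment in $[1, 1+\varepsilon]$ simultaneously satisfies all the threshold inequalities (one per pair of balls) that separate adjacent pairs from non-adjacent ones. Each forbidden/permitted distance gap is of order $\eta\mu^2$ and strictly positive, while the cross-terms introduced by the lack of orthogonality are of order $\mu$, so choosing $\mu$ much smaller than $\varepsilon$ turns the 3D system into a mild perturbation of the 4D system (which holds with all radii equal to $1$), and therefore feasible within $[1, 1+\varepsilon]$. Carrying this feasibility argument through yields the desired realization as a ball graph in $\mathcal{B}(1, 1+\varepsilon)$.
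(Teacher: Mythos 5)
Your construction is genuinely different from the paper's, and I believe it has a real gap. The paper keeps the vertex centers $p(v_i)$ collinear on the $z$-axis, at heights $\sqrt{3}+i\varepsilon'$. The $z$-axis is equidistant from the whole circle $\mathcal{C}$, so $d(p(v_i),\pi(p^s(e))) = \sqrt{1+(\sqrt{3}+i\varepsilon')^2}$ is the \emph{same number} for every edge-endpoint $\pi(p^s(e))$, and the height difference is compensated exactly by the per-vertex radius $r_i=\sqrt{1+(\sqrt{3}+i\varepsilon')^2}-1+\varepsilon''$. Then an infinitesimal push of $p^s(e)$ suffices to break exactly the one forbidden adjacency, because all the other distances to vertex balls sit at a uniform, strictly positive margin below the threshold $r_i+1$.

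In your construction, a short computation with $p(v)=(\mu\cos\theta_v,\mu\sin\theta_v,\sqrt{3-\mu^2})$ and $\pi(p^s(e))=(\cos\phi_s,\sin\phi_s,0)$ gives $d(p(v),\pi(p^s(e)))=\sqrt{4-2\mu\cos(\theta_v-\phi_s)}$, which depends jointly on the vertex and the edge endpoint and spans a range of width $\Theta(\mu)$ as $\phi_s$ varies. This per-pair $\Theta(\mu)$ spread is precisely what a per-ball radius cannot absorb: $r_v$ sets a single threshold $r_v+1$ against which every pair involving $p(v)$ is tested. To keep the largest non-forbidden distance (about $\sqrt{4+2\mu}$, attained when $\phi_s$ is roughly antipodal to $\theta_v$) below $r_v+1$ while pushing a forbidden distance that may be as small as $\sqrt{4-2\mu}$ above the same threshold, the displacement of $p^s(e)$ must be $\Theta(\mu)$, not infinitesimal. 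But a $\Theta(\mu)$ displacement of $p^s(e)$ also shifts its distance to every other $p(w)$ by $\Theta(\mu)$, and those margins are themselves only $O(\mu)$; your own order estimate (gaps $O(\eta\mu^2)$, cross-terms $O(\mu)$) already says the disturbing terms dominate the margins, and shrinking $\mu$ relative to $\varepsilon$ does not change this ratio. The collinearity/equidistance trick in the paper is what removes the angular dependence entirely and lets the per-ball radius meet a single clean threshold; without it, feasibility of the whole system of pairwise inequalities is not established by your argument.
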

\begin{proof}
  Let $x, y, z$ be the coordinates of $\mathbb R^3$.
  We start similarly and define the same $\pi(p^+(e))$ and $\pi(p^-(e))$ as in the previous construction for the points $p^+(e)$ and $p^-(e)$ on a circle $\mathcal C$ of diameter $2$ centered at $O=(0,0,0)$ on a plane $\mathcal P$ of equation $z=0$.
  One difference is that $\pi(p^+(e))$ and $\pi(p^-(e))$ are no longer projections.
  We then place the points $p(v)$ regularly spaced along the $z$-axis.
  More precisely, if the vertices are numbered $v_1,v_2,\ldots,v_n$, the position of $p(v_i)$ is $(0,0,\sqrt 3+i\varepsilon')$.
  The radius of the disk representing $v_i$ centered at $p(v_i)$ is set to $r_i := \sqrt{1+(\sqrt 3+i\varepsilon')^2}-1+\varepsilon''$.
  The radii associated to the centers $p^+(e)$ and $p^-(e)$ are all set to 1.
  We move $p^+(v_iv_j)$ (resp. $p^-(v_iv_j)$) away from $\pi(p^+(v_iv_j))$ (resp. $\pi(p^-(v_iv_j))$) in the opposite direction of $\overrightarrow{p^+(v_iv_j)p(v_i)}$ (resp. $\overrightarrow{p^-(v_iv_j)p(v_j)}$) by an infinitesimal quantity, in order to only suppress the overlap of the disks centered at $p^+(v_iv_j)$ and $p(v_i)$ (resp. $p^-(v_iv_j)$ and $p(v_j)$).
  We make $\varepsilon'$ and $\varepsilon''$ small enough that all the values $r_i$ are between 1 and $1+\varepsilon$.
\end{proof}

We call \emph{quasi unit ball graphs} those graphs in the intersection $\bigcap_{\varepsilon > 0}\mathcal B(1,1+\varepsilon)$.
As a corollary, we get some strong inapproximability even for quasi unit ball graphs.

\begin{corollary}
  There is a constant $\alpha > 1$ such that \cli on quasi unit ball graphs is not $\alpha$-approximable even in time $2^{n^{0.99}}$, unless the ETH fails.
  Moreover, \cli is NP-hard on quasi unit ball graphs.
\end{corollary}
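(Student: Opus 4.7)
The plan is to derive the corollary directly from Theorem~\ref{thm:bg} together with the ETH-hardness of \cli on co-2-subdivisions quoted just before Theorem~\ref{thm:4udg}. The only real content is a quantifier inversion: Theorem~\ref{thm:bg} provides, for each fixed $\varepsilon>0$, a representation of a given co-2-subdivision with radii in $[1,1+\varepsilon]$, but the graph itself does not depend on $\varepsilon$. So the same graph admits a valid representation for every $\varepsilon>0$, i.e., it lies in $\bigcap_{\varepsilon>0}\mathcal B(1,1+\varepsilon)$, which is exactly the class of quasi unit ball graphs.

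Concretely, first I would fix an arbitrary simple graph $G$ and let $G'$ be its co-2-subdivision. For each $\varepsilon>0$, Theorem~\ref{thm:bg} yields an $\varepsilon$-close ball representation $R_\varepsilon$ of $G'$ (the construction uses different auxiliary parameters $\varepsilon',\varepsilon''$, but this is irrelevant: only the realized graph matters). Consequently $G'\in\mathcal B(1,1+\varepsilon)$ for every $\varepsilon>0$, and hence $G'$ is a quasi unit ball graph by definition. Thus the class of quasi unit ball graphs contains every co-2-subdivision.

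Second, I would invoke the theorem of Bonnet \emph{et al.} stated above Theorem~\ref{thm:4udg}: there exists a constant $\alpha>1$ such that \cli on co-2-subdivisions is not $\alpha$-approximable in time $2^{n^{0.99}}$ unless the ETH fails, and \cli is NP-hard on this class. Since the reduction from the input graph $G$ to its co-2-subdivision $G'$ is polynomial and only blows up the number of vertices by a linear factor (from $n$ to $n+2m=O(n^2)$ vertices), the $2^{n^{0.99}}$-time lower bound is preserved up to absorbing constants in the exponent.

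The only subtlety—and it is a purely definitional one—is to stress that we are not asked to produce an $\varepsilon$-close representation uniformly in $\varepsilon$; we only need, for each $\varepsilon$, some representation with radii in $[1,1+\varepsilon]$. This is exactly what Theorem~\ref{thm:bg} delivers, and nothing in the ETH-based lower bound requires the algorithm to be given a representation at all. I expect no real obstacle: the main (already handled) technical step was the geometric construction in Theorem~\ref{thm:bg}, and the corollary is essentially just the composition of that embedding with the ETH-hardness and NP-hardness of \cli on co-2-subdivisions.
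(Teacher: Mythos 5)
Your proposal is correct and matches the paper's (implicit) proof: Theorem~\ref{thm:bg} shows every co-2-subdivision lies in $\mathcal B(1,1+\varepsilon)$ for all $\varepsilon>0$ and hence is a quasi unit ball graph by definition of $\bigcap_{\varepsilon>0}\mathcal B(1,1+\varepsilon)$, and the ETH-hardness and NP-hardness transfer to any superclass of co-2-subdivisions. Your closing remark about the blow-up from $G$ to $G'$ is unnecessary --- the cited theorem is already stated for co-2-subdivisions with $n$ denoting their own vertex count, so no further size bookkeeping is required --- and it is also slightly off (you call $n+2m=O(n^2)$ a ``linear factor''), but this does not affect the argument.
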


\section{Remarks and further directions}\label{sec:perspectives}


  The algorithm of Theorem~\ref{thm:eptas} also works for weighted graphs.
  The slight modifications to approximate \textsc{Maximum Weighted Independent Set} consist in sampling $S$ \emph{proportionally to the weights}, and to remove the \emph{lightest} layer $L_i$ (among the first $\lceil {2}/{\beta \varepsilon} \rceil$) and the \emph{lightest} set $S^\gamma$ (rather than the ones of minimum size).
  We then use repeatedly that \textsc{Maximum Weighted Independent Set} can be solved in polynomial time on bipartite graphs.

  This implies a randomized EPTAS for \textsc{Maximum Weighted Clique} on disk graphs and unit ball graphs, with the same arguments used to get $|I| \geqslant \beta |V(G)|$.
  Now, what we obtain is $w(I) \geqslant \beta w(V(G))$ where $w$ is the weight function and $w(X):=\Sigma_{u \in X}w(u)$.

  \medskip
  One might wonder what is the constant hidden in $O(1)$ in the time complexity of the randomized EPTAS $f(\varepsilon)n^{O(1)}$.
  Here is how to achieve near quadratic time $f(\varepsilon) n^2 \log n$ where $n$ is the number of vertices of our unit ball graph $G$ (even without a geometric representation).
  The first observation is that instead of finding an optimum stable set in a bipartite graph (which we do several times as a subroutine), one only needs a $1+\varepsilon$-approximation of it.
  This can be done in time $f(\varepsilon)m$, where $m$ is the number of edges \cite{Duan14}.
  We also need to overcome our first branching guess of a vertex which belongs to an optimal solution (and hence multiply by $n$ our complexity).
  To achieve this, we start by packing greedily disjoint neighborhoods $N(v_1),N(v_2), \ldots, N(v_k)$ of $G$, while it is possible.
  Then we consider the set of subgraphs $G_1,G_2,...,G_k$ induced by the vertices at distance at most 3 of $v_1,v_2,...,v_k$, respectively.
Observe that by maximality of the packing, every vertex is at distance at most 2 of some $v_i$, and thus, every edge, and even every clique of $G$ belongs to at least one $G_i$.
By a volume argument, any vertex, and thus any edge, belongs to at most a constant number of graphs $G_i$.
Thus we only need to compute the maximum clique over the graphs $G_i$, each with number of edges $m_i$, with the property that $\sum_{i=1}^km_i=O(m)$.
Now, the difference is that the clique number of each $G_i$ is at least a constant fraction of its number of vertices, so the sampling step succeeds with probability at least a positive constant.
We consider the complement of each graph $G_i$ and approximate the independence number.
The main and only obstacle is that finding a shortest odd cycle in quadratic time seems hopeless.
Indeed, the other elements of the algorithm: removing $N[S]$, computing the sets $L_i$ and $S^\gamma$, removing the lightest of them, and $1+\varepsilon$-approximating the maximum stable set in a bipartite graph, can all be done in quadratic time.

We only compute one (not necessarily shorter) odd cycle of length $h$, via breadth-first search for instance.
We can assume that $h =\omega(1/\varepsilon^2)$ otherwise we are done.
We take potentially \emph{thicker} slices for $S^\gamma$: instead of $z=\Theta(1/\varepsilon)$ consecutive strata, we take $10 \beta \varepsilon h$ consecutive strata.
Either the suggested $2$-coloring of $H''-S^\gamma$ is proper and we are done.
Or there is a monochromatic edge, which, going through the different cases of Claim~\ref{clm:bip}, yields a new odd cycle of length $O(1/\varepsilon)$ (which is an easy case to handle) or short-cutting the previous odd cycle by at least $\Omega(\varepsilon h)$ vertices.
In the latter case, our new odd cycle is shorter by a constant multiplicative factor $1-\varepsilon$.
Hence, after at most $O(\log n)$ improvements, we find an odd cycle which is small enough to conclude.
Therefore, the overall complexity of the algorithm is $f(\varepsilon)n^2 \log n$.

  
  \medskip

  The obvious remaining question is the complexity of \cli in disk graphs and in unit ball graphs.
  An interesting direction would be to find a toy problem on which we could prove NP-hardness.
  A nice class, which appears to be a subclass of unit ball graphs, is that of the so-called \emph{Borsuk graphs}: We are given some (small) real $\varepsilon>0$ and a finite collection $V$ of unit vectors in $\mathbb R^3$. The Borsuk graph $B(V,\varepsilon)$ has vertex set $V$ and its edges are all pairs $\{v,v'\}$ whose dot product is at most $-1+\varepsilon$ (i.e. near antipodal).

  The difficulty of computing the (weighted) independence number on Borsuk graphs is also an open question.
  One can go one simplification step further and focus on a subclass of Borsuk graphs, namely quadrangulations of the projective plane.
  These well-studied objects have the striking property to be either bipartite or 4-chromatic.
  The complexity of \textsc{Maximum Weighted Independent Set} is open for 4-chromatic quadrangulations.
  Finally, the last simplification step in this direction is the following.

\begin{problem}
What is the computational complexity of \textsc{Maximum Weighted Independent Set} on the \emph{moebius grid}?
\end{problem}

By \emph{moebius grid} we mean a grid with an even number of columns in which the first and last columns are antipodally identified. 
Note that, in this example, every two odd cycles intersect ($\text{ocp}=1$).
However, there is no small odd-cycle transversal.
If the size of the grid is $\sqrt n \times \sqrt n$, then the smallest odd-cycle transversal has size roughly $\sqrt n$.

\medskip

A second question is to derandomize the EPTAS.
  The difficulty here is concentrated in the sampling.
  The VC-dimension argument seems easy to deal with, however we need that our sampling falls in the maximum independent set (or at least in some independent set which is close to maximum). 

  \medskip

  Another natural question is to find a superclass of geometric intersection graphs which both contain unit balls and disks.
  More generally, is it possible to explain why we have the same forbidden induced subgraph (the complement of a disjoint union of two odd cycles) for disk graphs and unit ball graphs?
  So far, the proofs that this is indeed an obstruction (see Theorem 6 in \cite{bonnetetal18} and Theorem~\ref{thm:no-two-odd-cycles-ubg} of this paper) are quite different for the two classes.
  
  Let us call \emph{quasi unit disk graphs} those disk graphs that can be realized for any $\varepsilon > 0$ with disks having all the radii in the interval $[1,1+\varepsilon]$.
  Recall that we showed that, for the clique problem, quasi unit ball graphs are unlikely to have a QPTAS, while unit ball graphs admit an EPTAS.
  In dimension 2, it can be easily shown that unit disk graphs form a proper subset of quasi unit disk graphs, which form themselves a proper subset of disk graphs.
  Can we find for this intermediate class an efficient exact algorithm solving \cli?
  \begin{problem}
   Is there a polynomial-time algorithm for \cli on quasi unit disk graphs?
  \end{problem}


  \medskip

  Our PTAS works for \mis under three hypotheses.
  While it is clear that we crucially need that $\text{iocp} \geqslant 1$ (or at least that $\text{iocp}$ is constant), as far as we can tell, the boundedness of the VC-dimension and the fact that the solution is fairly large might not be required.
  \begin{problem}
    Is there a PTAS for \mis on graphs without the union of two odd cycles as an induced subgraph?
  \end{problem}

  Atminas and Zamaraev \cite{Atminas16} showed that the complement of $K_2+C_s$ is not a unit disk graph when $s$ is odd (where $K_2$ is an edge and $C_s$ is a cycle on $s$ vertices).
  Is this obstruction enough to obtain an alternative polynomial-time algorithm for \cli on unit disk graphs? 
  \begin{problem}
  Is \mis solvable in polynomial-time on graphs excluding the union of an edge and an odd cycle as an induced subgraph?
  \end{problem}

\medskip

\bibliographystyle{abbrv}


\end{document}